\newtheorem{definition}{Definition}
\newtheorem{question}[definition]{Question}
\newtheorem{lemma}[definition]{Lemma}
\newtheorem{remark}[definition]{Remark}
\newtheorem{theorem}[definition]{Theorem}
\newtheorem{example}[definition]{Example}
\newtheorem{proposition}[definition]{Proposition}
\newtheorem{corollary}[definition]{Corollary}
\newtheorem{conjecture}[definition]{Conjecture}
\newtheorem{memo}[definition]{Memo}
\def\squareforqed{\hbox{\rlap{$\sqcap$}$\sqcup$}}
\def\qed{\ifmmode\squareforqed\else{\unskip\nobreak\hfil
\penalty50\hskip1em\null\nobreak\hfil\squareforqed
\parfillskip=0pt\finalhyphendemerits=0\endgraf}\fi}
\def\endenv{\ifmmode\;\else{\unskip\nobreak\hfil
\penalty50\hskip1em\null\nobreak\hfil\;
\parfillskip=0pt\finalhyphendemerits=0\endgraf}\fi}
\newenvironment{proof}{\noindent \textbf{{Proof.~} }}{\qed}
\def\Dbar{\leavevmode\lower.6ex\hbox to 0pt
{\hskip-.23ex\accent"16\hss}D}
\def\url@leostyle{%
  \@ifundefined{selectfont}{\def\UrlFont{\sf}}{\def\UrlFont{\small\ttfamily}}}
\def\bcj{\begin{conjecture}}
\def\ecj{\end{conjecture}}
\def\bcr{\begin{corollary}}
\def\ecr{\end{corollary}}
\def\bd{\begin{definition}}
\def\ed{\end{definition}}
\def\bea{\begin{eqnarray}}
\def\eea{\end{eqnarray}}
\def\beq{\begin{equation}}
\def\eeq{\end{equation}}
\def\bal{\begin{aligned}}
\def\eal{\end{aligned}}
\def\bem{\begin{enumerate}}
\def\eem{\end{enumerate}}
\def\bex{\begin{example}}
\def\eex{\end{example}}
\def\bim{\begin{itemize}}
\def\eim{\end{itemize}}
\def\bl{\begin{lemma}}
\def\el{\end{lemma}}
\def\bma{\begin{bmatrix}}
\def\ema{\end{bmatrix}}
\def\bpf{\begin{proof}}
\def\epf{\end{proof}}
\def\bpp{\begin{proposition}}
\def\epp{\end{proposition}}
\def\bqu{\begin{question}}
\def\equ{\end{question}}
\def\br{\begin{remark}}
\def\er{\end{remark}}
\def\bt{\begin{theorem}}
\def\et{\end{theorem}}
\def\bmm{\begin{memo}}
\def\emm{\end{memo}}
\def\btb{\begin{tabular}}
\def\etb{\end{tabular}}
\newcommand{\nc}{\newcommand}
\def\a{\alpha}
\def\b{\beta}
\def\g{\gamma}
\def\l{\lambda}
\def\r{\rho}
\def\s{\sigma}
\def\ps{\psi}
\def\G{\Gamma}
\def\S{\Sigma}
\nc{\bbA}{\mathbb{A}} \nc{\bbB}{\mathbb{B}} \nc{\bbC}{\mathbb{C}}
 \nc{\bbD}{\mathbb{D}} \nc{\bbE}{\mathbb{E}} \nc{\bbF}{\mathbb{F}}
 \nc{\bbG}{\mathbb{G}} \nc{\bbH}{\mathbb{H}} \nc{\bbI}{\mathbb{I}}
 \nc{\bbJ}{\mathbb{J}} \nc{\bbK}{\mathbb{K}} \nc{\bbL}{\mathbb{L}}
 \nc{\bbM}{\mathbb{M}} \nc{\bbN}{\mathbb{N}} \nc{\bbO}{\mathbb{O}}
 \nc{\bbP}{\mathbb{P}} \nc{\bbQ}{\mathbb{Q}} \nc{\bbR}{\mathbb{R}}
 \nc{\bbS}{\mathbb{S}} \nc{\bbT}{\mathbb{T}} \nc{\bbU}{\mathbb{U}}
 \nc{\bbV}{\mathbb{V}} \nc{\bbW}{\mathbb{W}} \nc{\bbX}{\mathbb{X}}
 \nc{\bbZ}{\mathbb{Z}}
 \nc{\bA}{{\bf A}} \nc{\bB}{{\bf B}} \nc{\bC}{{\bf C}}
 \nc{\bD}{{\bf D}} \nc{\bE}{{\bf E}} \nc{\bF}{{\bf F}}
 \nc{\bG}{{\bf G}} \nc{\bH}{{\bf H}} \nc{\bI}{{\bf I}}
 \nc{\bJ}{{\bf J}} \nc{\bK}{{\bf K}} \nc{\bL}{{\bf L}}
 \nc{\bM}{{\bf M}} \nc{\bN}{{\bf N}} \nc{\bO}{{\bf O}}
 \nc{\bP}{{\bf P}} \nc{\bQ}{{\bf Q}} \nc{\bR}{{\bf R}}
 \nc{\bS}{{\bf S}} \nc{\bT}{{\bf T}} \nc{\bU}{{\bf U}}
 \nc{\bV}{{\bf V}} \nc{\bW}{{\bf W}} \nc{\bX}{{\bf X}}
 \nc{\bZ}{{\bf Z}}
\nc{\cA}{{\cal A}} \nc{\cB}{{\cal B}} \nc{\cC}{{\cal C}}
\nc{\cD}{{\cal D}} \nc{\cE}{{\cal E}} \nc{\cF}{{\cal F}}
\nc{\cG}{{\cal G}} \nc{\cH}{{\cal H}} \nc{\cI}{{\cal I}}
\nc{\cJ}{{\cal J}} \nc{\cK}{{\cal K}} \nc{\cL}{{\cal L}}
\nc{\cM}{{\cal M}} \nc{\cN}{{\cal N}} \nc{\cO}{{\cal O}}
\nc{\cP}{{\cal P}} \nc{\cQ}{{\cal Q}} \nc{\cR}{{\cal R}}
\nc{\cS}{{\cal S}} \nc{\cT}{{\cal T}} \nc{\cU}{{\cal U}}
\nc{\cV}{{\cal V}} \nc{\cW}{{\cal W}} \nc{\cX}{{\cal X}}
\nc{\cZ}{{\cal Z}}
\nc{\hA}{{\hat{A}}} \nc{\hB}{{\hat{B}}} \nc{\hC}{{\hat{C}}}
\nc{\hD}{{\hat{D}}} \nc{\hE}{{\hat{E}}} \nc{\hF}{{\hat{F}}}
\nc{\hG}{{\hat{G}}} \nc{\hH}{{\hat{H}}} \nc{\hI}{{\hat{I}}}
\nc{\hJ}{{\hat{J}}} \nc{\hK}{{\hat{K}}} \nc{\hL}{{\hat{L}}}
\nc{\hM}{{\hat{M}}} \nc{\hN}{{\hat{N}}} \nc{\hO}{{\hat{O}}}
\nc{\hP}{{\hat{P}}} \nc{\hR}{{\hat{R}}} \nc{\hS}{{\hat{S}}}
\nc{\hT}{{\hat{T}}} \nc{\hU}{{\hat{U}}} \nc{\hV}{{\hat{V}}}
\nc{\hW}{{\hat{W}}} \nc{\hX}{{\hat{X}}} \nc{\hZ}{{\hat{Z}}}
\nc{\hn}{{\hat{n}}}
\nc{\as}{{\cal AS}}
	\nc{\app}{{\cal AP}}
\def\max{\mathop{\rm max}}
\def\rank{\mathop{\rm rank}}
\def\tr{\mathop{\rm Tr}}
\def\dg{\dagger}
\newcommand{\bra}[1]{\langle#1|}
\newcommand{\ket}[1]{|#1\rangle}
\newcommand{\proj}[1]{| #1\rangle\!\langle #1 |}
\newcommand{\ketbra}[2]{|#1\rangle\!\langle#2|}
\begin{document}

\title{Entanglement distillation on symmetric two-qutrit entangled states of rank five}
\author[1]{Zihua Song}
\author[1*]{Lin Chen}
\author[1*]{Yongge Wang}

\affil[1]{LMIB(Beihang University), Ministry of Education,
and School of Mathematical Sciences, Beihang University, Beijing 100191, China}
\affil[*]{Corresponding authors: linchen@buaa.edu.cn (Lin Chen); wangyongge@buaa.edu.cn (Yongge Wang)}

\date{\today}

\maketitle


\begin{abstract}
Entanglement distillation is a key step in quantum information, both theoretically and practically. It has been proven that non-positive-partial transpose (NPT) entangled states of rank at most four is 1-distillable under local operation and classical communications. In this paper we investigate the distillation of a more complex family of NPT entangled states, namely a family of symmetric two-qutrit states $\r$ of rank five with given eigenvectors. We explicitly construct five families of such states by requiring four of the five eigenvalues to be the same. We respectively show that some of them are 1-distillable. It turns out that such states may be not 1-distillable for some interval of eigenvalues. We provide some conditions for eigenvalues that allow $\r$ to be 1-distillable or to be 1-undistillable.
\end{abstract}

Keyword: entanglement distillation, projection, symmetric state


\section{Introduction}

Entanglement has been used in various quantum-information applications \cite{bbc93} and fundamental of quantum theories \cite{werner89} in the past decades.
Constructing the theoretical tools for detection of entanglement is a key task. One of such tools is the so-called partial transpose map, namely \cite{peres1996}. It says that a non-entangled (i.e., separable) state has positive partial transpose (PPT)
\cite{hhh96}, and the converse also holds for two-qubit and qubit-qutrit systems \cite{Horodecki1997Reduction}. Apart from the entanglement detection, it has been shown that many quantum-information tasks require pure rather than mixed entanglement, where the latter is a more usual form of quantum correlation due to the unavoidable noise from nature. Hence, extracting pure entangled states (such as Bell states) from mixed entangled states, called entanglement distillation, is a basic task for quantum-information processing. The mixed states with extractable pure entanglement are said to be distillable. It has been shown all two-qubit entangled states and thus non-PPT (NPT) $2\times n$ entangled states are distillable. \cite{1997Inseparable}. Further, the states violating the reduction criterion are also distillable \cite{hhh1998}.
The quantitative estimation of distillable entanglement and secret key have been studied \cite{rains1999,rains2001,dw2005}.
Experimental entanglement distillation related to nonlocality was also proposed \cite{Kwiat2001Experimental}. Further, the restriction over the distillability of bipartite marginals of a tripartite state has been studied \cite{ch12ijmpb}.

On the other hand, all bipartite states can be converted into Werner states under local operation and classical communications (LOCC), thus it suffices to distill Werner states \cite{2000Evidence}. However, such states have full rank and its distillability has turned out to be hard to be determined. Actually, the distillability problem has been proposed as one of the five key theoretical problems proposed recently \cite{2002.03233}. The main difficulty of distillability problem lies in the fact that many copies of target states are required in the definition of entanglement distillation \cite{watrous04,vd06,Chen2018Generalized}
, which makes the problem mathematically complex.
As a result, researchers have considered the distillability of states in terms of their matrix rank. For example, rank-two, three and four NPT entangled states have been shown to be distillable
\cite{1999Rank,Lin2008Rank,cd16pra}. The distillability of states with low rank in high dimensions have also been studied \cite{cd11jpa}.

In this paper, we go a step further in the above line, that is, we consider to distill NPT entangled states $\r$ of rank five. 
We begin by introducing the existing facts on 1-distillability in Lemmas \ref{le:mxnNPT=distill} and \ref{le:3x3NPT=oneUndis}. They respectively work for high-dimensional bipartite and two-qutrit systems. Lemma \ref{le:symmU=V} presents a method for the simplification of specific quantum states via unitary matrices. We also provide facts on linear algebra including partial transpose in Lemmas \ref{le:hermi=posi}-\ref{le:(m-1)(n-1)+1=prodvector}. Next, we show in Lemma \ref{le:0001span} that
if $\r$ is 1-undistillable, then the range of $\r$ has no the two-dimensional subspace spanned by $\ket{00}$ and $\ket{01}$. It implies the result in Theorem \ref{th:rho=3x3symRank5}. That is, if $\r$ has the range in the symmetric subspace, then $\r$ is 1-distillable when its kernel has a product vector. We explicitly list three cases for the kernel. By excluding the above distillable states, we arrive at the hardest part of this paper. We explicitly construct a family $\r$ diagonal in the five orthonormal pure states in equation \eqref{eq:kete1} with positive eigenvalues $\l_1,...,\l_5$. The main results are $\r$ must be 1-distillable when certain eigenvalues are the same and we also provide conditions for eigenvalues that allow $\r$ to be 1-distillable or to be 1-undistillable. In Lemma \ref{le:l12=x}, we show that $\r$ is 1-distillable if for any $i \in \{1,2,3,4\}$, $\lambda_j = \frac{1 - \lambda_i}{4}$ for all $j \neq i$. When $\l_1=\l_2=\l_3=\l_4=\frac{1-\l_5}{4}$, the situation becomes more complex. We show in Proposition \ref{pr:l5=x} that $\r$ is 1-distillable for the interval $\l_5\in[0,\frac{24\sqrt{2}-33}{7})\cup(\frac{33-12\sqrt{6}}{25},1]$. The distillability of $\r$ with $\l_5$ not in the interval is unknown yet. We show in Example \ref{ex:l5=x} that there indeed exists at least one 1-undistillable $\r$, which may offer inspiration for future research.

The rest of this section is organized as follows. In Sec. \ref{sec:pre} we introduce the primary facts and knowledge used in this paper. In Sec. \ref{sec:res} we introduce the main result of this paper. We finally conclude in Sec. \ref{sec:con}.

\section{Preliminaries}
\label{sec:pre}

In this section, we introduce the main technique and facts used in this paper. Let \(\mathcal{H}=\mathcal{H}_A \otimes \mathcal{H}_B\) be the bipartite Hilbert space with \(\mathrm{dim} \mathcal{H}_A=M\) and \(\mathrm{dim} \mathcal{H}_B=N\). We study bipartite quantum states \(\rho\) on \(\mathcal{H}\). We denote the range and kernel of a linear map \(\rho\) with \(\mathcal{R}(\rho)\) and \(\ker \rho\), respectively. Unless stated otherwise, the states will not be normalized. We denote orthonormal bases of \(\mathcal{H}_A\) and \(\mathcal{H}_B\) with \(\{|i\rangle_A:i=0,1,\cdots,M-1\}\) and \(\{|j\rangle_B:j=0,1,\cdots,N-1\}\), respectively. The partial transpose of \(\rho\) with respect to the system \(A\) is defined as \(\rho^{\Gamma} := \Sigma^{M-1}_{i,j=0} (\ketbra{i}{j}\otimes I_n)\r(\ketbra{i}{j}\otimes I_n)\). We say that \(\rho\) is partial-positive-transpose (PPT) if \(\rho^{\Gamma}\ge0\). Otherwise, \(\rho\) is NPT, that is, the Hermitian matrix \(\rho^\G\) has at least one negative eigenvalue. The NPT states are always entangled, because non-entangled (i.e., separable) states are the convex sum of product states.

The distillability problem requires many-copy states from a composite system. Let \(\rho_{A_iB_i}\) be an \(M_i \times N_i\) state of rank \(r_i\) acting on the Hilbert space \(\mathcal{H}_{A_i} \otimes \mathcal{H}_{B_i}\), \(i=1,2\). Suppose \(\rho\) of systems \(A_1,A_2\) and \(B_1,B_2\) is a state acting on the Hilbert space \(\mathcal{H}_{A_1} \otimes \mathcal{H}_{B_1} \otimes \mathcal{H}_{A_2} \otimes \mathcal{H}_{B_2}\), such that \(\tr_{A_1B_1} \rho = \rho_{A_2B_2}\) and \(\tr_{A_2B_2} \rho = \rho_{A_1B_1}\). By switching the two middle factors, we can consider \(\rho\) a composite bipartite state acting on the Hilbert space \(\mathcal{H}_A \otimes \mathcal{H}_B\), where \(\mathcal{H}_A=\mathcal{H}_{A_1} \otimes \mathcal{H}_{A_2}\) and \(\mathcal{H}_B=\mathcal{H}_{B_1} \otimes \mathcal{H}_{B_2}\). In that case we shall write \(\rho=\rho_{A_1A_2:B_1B_2}\). So \(\rho\) is an \(M_1M_2 \times N_1N_2\) state of rank not larger than \(r_1r_2\). In particular for the tensor product \(\rho=\rho_{A_1B_1} \otimes \rho_{A_2B_2}\), it is easy to see that \(\rho\) is an \(M_1M_2 \times N_1N_2\) state of rank \(r_1r_2\). 
The above definition can be easily generalized to the tensor product of \(N\) states \(\rho_{A_iB_i}\), \(i=1,\cdots,N\). They form a bipartite state on the Hilbert space \(\mathcal{H}_{A_1,\cdots,A_N} \otimes \mathcal{H}_{B_1,\cdots,B_N}\). It is written as \(\mathcal{H}^{\otimes n}\) when \(\mathcal{H}_{A_i} \otimes \mathcal{H}_{B_i}=\mathcal{H}\). Now we can define the distillability of entangled states.

\begin{definition}
    A bipartite state \(\rho\) is n-distillable under LOCC if there exists a Schmidt-rank-two state \(|\ps \rangle \in \mathcal{H}^{\otimes n}\) such that \(\langle \ps |(\rho^{\otimes n})^{\G}|\ps\rangle <0\). Here, the state $\ket{\ps}$ can be written as the superposition of two pure product states. Equivalently, \(\rho\) is n-distillable under LOCC if there exists a rank-two projection operator $P$ on subsystem $A^n$ such that the matrix $(P\otimes I_{B^n})(\rho^{\otimes n})^{\G}(P^\dg\otimes I_{B^n})$ has at least one negative eigenvalue. If a finite $n$ exists then we say that $\r$ is distillable. If no such $n$ exists, $\r$ is called undistillable.
\qed
\end{definition}
If an entangled state \(\rho\) is not distillable, then we say that $\r$ is bound entangled. For example, PPT entangled states are bound entangled states. The long-standing distillability problem asks whether a bound entangled state can be NPT. To simplify the  problem, we convert one entangled state into another under stochastic LOCC (SLOCC) or equivalently, product general linear group (PGL). Formally, two bipartite states \(\rho\) and \(\s\) are equivalent under SLOCC if there exists an invertible local operator (ILO) \(A \otimes B\) such that \(\rho=(A^* \otimes B^*)\s(A \otimes B)\). 
From the definition of distillable entangled states, we see that SLOCC equivalent states are distillable at the same time. 

Next we introduce some facts on states with separability and distillable entanglement. 
We shall refer to the positive (resp. zero, negative) subspace of an Hermitian matrix $H$ as the subspace spanned by the eigenvectors of positive (resp. zero, negative) eigenvalues of $H$.
\begin{lemma}
\label{le:mxnNPT=distill}
Let $2\le m \le n$. If an $m\times n$ NPT state $\r$ is one-undistillable, then

(i) $m>2$;

(ii) $\rank \r>4$;

(iii) $\rank \r> \max\{\rank\r_A,\rank\r_B\}$;

(iv) the negative subspace of $\r^\G$ contains no vector of Schmidt rank one or two.
\qed
\end{lemma}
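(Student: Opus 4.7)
The plan is to prove (iv) directly from the definition of 1-distillability, deduce (i) from (iv) by a trivial Schmidt-rank observation, and reduce (ii) and (iii) to the prior rank results cited just above the lemma.

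For (iv), suppose for contradiction that the negative subspace of $\rho^\Gamma$ contains a vector $|\psi\rangle$ of Schmidt rank at most two. Expanding $|\psi\rangle$ in an eigenbasis of $\rho^\Gamma$ confines its support to eigenvectors with strictly negative eigenvalues, so $\langle\psi|\rho^\Gamma|\psi\rangle<0$. The Schmidt-rank-two reformulation of 1-distillability in the preceding definition then contradicts the 1-undistillability hypothesis. The Schmidt-rank-one subcase is even more trivial, since for any product vector $|ab\rangle$ one has $\langle ab|\rho^\Gamma|ab\rangle=\langle\bar a,b|\rho|\bar a,b\rangle\ge 0$ by positivity of $\rho$, so no product vector can lie in the strictly negative subspace in the first place.

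For (i), when $m=2$ every vector in $\mathcal{H}_A\otimes\mathcal{H}_B$ has Schmidt rank at most two, so any eigenvector of $\rho^\Gamma$ with negative eigenvalue---which exists by the NPT hypothesis---is a Schmidt-rank-$\le 2$ vector in the negative subspace, violating (iv).

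For (ii), the strategy is to apply an SLOCC filter so that $\rho_A$ and $\rho_B$ have full rank on their supports, forcing $M,N\le\rank\rho\le 4$. The claim then reduces to the 1-distillability of NPT states of rank at most four in such small systems, which is established in \cite{1999Rank,Lin2008Rank,cd16pra}; the underlying mechanism is a dimension count producing a Schmidt-rank-$\le 2$ vector inside the at-most-four-dimensional negative subspace of $\rho^\Gamma$, after which (iv) applies. For (iii), assuming $\rank\rho=\rank\rho_B$ (the other case is symmetric), one spectrally writes $\rho=\sum_k|v_k\rangle\langle v_k|$ and views each $|v_k\rangle$ as an $M\times N$ matrix $V_k$; the equality of ranks severely constrains the family $\{V_k\}$, and one extracts a Schmidt-rank-$\le 2$ vector in the negative subspace of $\rho^\Gamma$ to contradict (iv). The main obstacles are therefore (ii) and (iii): both rest on the delicate linear-algebra classification of low-dimensional bipartite subspaces by Schmidt rank, and my plan is to import the statements from the cited references rather than redo the dimension counts in place.
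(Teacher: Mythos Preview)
The paper does not actually prove this lemma: it is stated as a preliminary fact, ending with \qed\ immediately after the statement, with the relevant references for the rank conditions appearing in the introduction (\cite{1997Inseparable} for $2\times n$ states, \cite{1999Rank,Lin2008Rank,cd16pra} for rank $\le 4$). Your proposal is therefore not competing with any argument in the paper; it is supplying one where the paper simply cites.

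Your treatment of (iv) and (i) is correct and is exactly the standard argument: a vector in the negative eigenspace of $\rho^\Gamma$ has strictly negative expectation, so if its Schmidt rank is at most two the definition of 1-distillability is triggered; and when $m=2$ every vector has Schmidt rank at most two, so any negative eigenvector of $\rho^\Gamma$ does the job. Your side remark that product vectors cannot lie in the negative subspace because $\langle ab|\rho^\Gamma|ab\rangle=\langle \bar a,b|\rho|\bar a,b\rangle\ge 0$ is also fine. For (ii) and (iii) you correctly identify that these require the nontrivial classification results in the cited papers and plan to import them; that is precisely what the paper does implicitly. One small quibble: your sketch for (ii) says the ``at-most-four-dimensional negative subspace of $\rho^\Gamma$'' --- the negative subspace need not have dimension $\le 4$ just because $\rank\rho\le 4$, so if you were to flesh this out you should invoke the cited rank results directly rather than that dimension heuristic.
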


The following fact is from \cite{cd16pra}.

\begin{lemma}
\label{le:3x3NPT=oneUndis}
If a two-qutrit NPT state $\r$ is one-undistillable then 

(i) $\ker \rho$ has no product vector, thus $\rank\r>4$;

(ii) $\r^\G$ has exactly one negative eigenvalue and eight positive eigenvalues. As a corollary, $\rank\r\ge5$.
\qed
\end{lemma}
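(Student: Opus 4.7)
\textbf{Plan for the proof of Lemma \ref{le:3x3NPT=oneUndis}.}

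For part (i), my plan is to argue the contrapositive: assume $\ket{a,b}\in\ker\rho$ and produce a Schmidt-rank-two vector $\ket{\psi}$ with $\bra{\psi}\rho^\G\ket{\psi}<0$. Since SLOCC equivalence preserves both rank and the distillability status, I would apply an ILO and assume WLOG that $\ket{a,b}=\ket{0,0}$. The positivity of $\rho$ combined with $\bra{0,0}\rho\ket{0,0}=0$ forces the entire $\ket{00}$-row and column of $\rho$ to vanish, which gives a very restrictive block structure for $\rho^\G$. I would then consider the $2\times 3$ reduction obtained by projecting $A$ onto $\lin\{\ket 0,\ket 1\}$: if this reduction is NPT, we are done via Lemma \ref{le:mxnNPT=distill} applied to $m=2$; if it is PPT, I would use the structural constraint together with the NPT hypothesis on $\rho$ itself to construct an explicit Schmidt-rank-two witness $\ket{\psi}=\ket 0\ket\beta+\ket 1\ket\alpha$. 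The consequence $\rank\rho>4$ then follows from Lemma \ref{le:(m-1)(n-1)+1=prodvector}, since if $\rank\rho\le 4$ then $\ker\rho$ has dimension at least $5=(3-1)(3-1)+1$ and must contain a product vector.

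For part (ii), my plan is to prove the two claims ``$\rho^\G$ has at most one negative eigenvalue'' and ``$\rho^\G$ has no zero eigenvalues'' separately, using the same geometric idea. The underlying fact I would invoke is that, via the identification of a bipartite vector in $\bbC^3\otimes\bbC^3$ with a $3\times 3$ matrix, the set of Schmidt-rank-$\le 2$ vectors is the cubic hypersurface $\{\det M=0\}$ in $\bbP^8$. Consequently every projective line in $\bbP^8$ meets this cubic, so every two-dimensional linear subspace of $\bbC^3\otimes\bbC^3$ contains a nonzero Schmidt-rank-$\le 2$ vector.

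First suppose $\rho^\G$ has at least two negative eigenvalues. Then the negative subspace $W$ of $\rho^\G$ has dimension $\ge 2$ and by the cubic argument contains a Schmidt-rank-$\le 2$ vector, directly contradicting Lemma \ref{le:mxnNPT=distill}(iv). So $\rho^\G$ has exactly one negative eigenvalue $-\mu$ with eigenvector $\ket{\psi_-}$, and by Lemma \ref{le:mxnNPT=distill}(iv) $\ket{\psi_-}$ has Schmidt rank three.

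Now suppose $\rho^\G$ has a zero eigenvector $\ket{\phi}$. I would consider the plane $V=\lin\{\ket{\psi_-},\ket{\phi}\}$ and look at the one-variable polynomial $p(\zeta)=\det(M_\phi+\zeta M_-)$ of degree three, whose roots are exactly the values of $\zeta$ for which $\ket{\phi}+\zeta\ket{\psi_-}$ has Schmidt rank $\le 2$. Since $\ket{\psi_-}$ has Schmidt rank three, $\det M_-\neq 0$, so $p$ is not identically zero and has three complex roots. For any root $\zeta_0$ of $p$, the vector $\ket{\phi'}=\ket{\phi}+\zeta_0\ket{\psi_-}$ has Schmidt rank $\le 2$, and orthogonality of $\ket{\phi}$ and $\ket{\psi_-}$ together with $\rho^\G\ket{\phi}=0$ yields $\bra{\phi'}\rho^\G\ket{\phi'}=-\mu|\zeta_0|^2$. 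The main obstacle is the degenerate case in which all three roots of $p$ are zero; here I would exploit the freedom to replace $\ket{\phi}$ by another vector of the zero subspace (or by $\ket{\phi}+\eta\ket{\psi_i}$ for a positive-eigenvalue eigenvector) to break this degeneracy, since the vanishing of all non-constant coefficients of $p$ imposes at least two nontrivial linear conditions on $M_\phi$ relative to the fixed $M_-$. In every case we obtain a Schmidt-rank-$\le 2$ witness with strictly negative expectation, contradicting one-undistillability. Hence $\rho^\G$ has eight strictly positive and one strictly negative eigenvalue; in particular $\rank\rho^\G=9$, and combining this with part (i) gives $\rank\rho\ge 5$.
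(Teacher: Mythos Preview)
The paper does not actually prove Lemma~\ref{le:3x3NPT=oneUndis}: it is quoted verbatim from \cite{cd16pra} and closed with a \qed\ symbol, so there is no ``paper's own proof'' to compare against. Your proposal is therefore an independent attempt, and I will assess it on its own merits.

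Part~(ii) is essentially sound. The cubic--hypersurface observation (every $2$--plane in $\bbC^3\otimes\bbC^3$ meets the locus $\det M=0$) is exactly the right tool, and your argument that two negative eigenvalues force a Schmidt-rank-$\le2$ vector in the negative subspace is clean. The zero-eigenvalue step is also correct in the generic case. The degenerate case (all three roots of $p$ at $0$, i.e.\ $\ket{\phi}$ has Schmidt rank $\le 2$) does need more than what you wrote: one has to argue carefully that the perturbation $\ket{\phi}+\eta\ket{\psi_i}+\zeta\ket{\psi_-}$ can be tuned so that $\det=0$ forces $|\zeta|\gtrsim|\eta|^{1/3}$, making $\lambda_i|\eta|^2-\mu|\zeta|^2<0$. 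This can be done (choose $\ket{\psi_i}$ with $\tr(\mathrm{adj}(M_\phi)M_i)\ne 0$, which exists since $\mathrm{adj}(M_\phi)\ne0$), but it is more delicate than your one-line sketch suggests.

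Part~(i) has a genuine gap. In the branch ``the $2\times3$ reduction onto $\lin\{\ket0,\ket1\}_A$ is PPT'', you propose a witness of the form $\ket{\psi}=\ket{0,\beta}+\ket{1,\alpha}$. But any such $\ket{\psi}$ lies in the range of $P\otimes I$ with $P=\proj0+\proj1$, hence
\[
\bra{\psi}\rho^\G\ket{\psi}=\bra{\psi}(P\otimes I)\rho^\G(P\otimes I)\ket{\psi}=\bra{\psi}\sigma^\G\ket{\psi}\ge 0,
\]
since you just assumed $\sigma^\G\ge0$. So this witness \emph{cannot} be negative in that branch; the argument as written collapses. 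A route that does work is closer in spirit to your part~(ii): from $\ket{0,0}\in\ker\rho$ one gets $\bra{0,0}\rho^\G\ket{0,0}=0$; if $\rho^\G\ket{0,0}\ne0$, perturb $\ket{0,0}$ by a product vector to obtain a Schmidt-rank-$\le2$ negative witness; if $\rho^\G\ket{0,0}=0$, run the same cubic/perturbation analysis you used for~(ii) on the pair $\ket{0,0},\ket{\psi_-}$ (now the constant term of $p$ vanishes automatically, and one checks the nonzero root exists or handles the degenerate case as above). Either way, the key idea is the determinant trick, not a $2\times3$ block reduction.
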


\begin{lemma}
\label{le:symmU=V}
Let $\ket{a}\in\mathrm{span}\{\ket{jj},\ket{ik}+\ket{ki}\}\subset\bbC^d\otimes\bbC^d$ be a non-normalized symmetric state. Then we can find 
a unitary $U=V\in\cU^d$ that $\ket{a}=(U\otimes V)\S_{j=0}^{d-1}\sqrt{s_j}\ket{j,j}$.
\end{lemma}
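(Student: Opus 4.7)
The plan is to translate the statement into a question about complex symmetric matrices and then invoke the Autonne--Takagi factorization. Write $\ket{a}=\sum_{m,n=0}^{d-1}A_{mn}\ket{m,n}$ with coefficient matrix $A\in\bbC^{d\times d}$. The hypothesis that $\ket{a}$ lies in the span of $\{\ket{jj}\}\cup\{\ket{ik}+\ket{ki}\}$ is precisely the statement that $A=A^T$ (complex symmetric, not Hermitian).

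Next, recall the Autonne--Takagi factorization: every complex symmetric matrix $A$ admits a decomposition $A=UDU^T$, where $U\in\cU^d$ and $D=\diag(\sqrt{s_0},\ldots,\sqrt{s_{d-1}})$ has real non-negative entries (the singular values of $A$). I would either cite this as a classical result or sketch it in a line by diagonalizing the positive-semidefinite Hermitian matrix $A\bar A$ and choosing, on each eigenspace, an orthonormal basis with a phase convention that makes $A$ act as a non-negative scalar. This is the only substantive step in the argument.

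Finally, I would transport the matrix identity back to the tensor picture. A direct index computation gives, for any operators $X,Y$ on $\bbC^d$,
\[
(X\otimes Y)\sum_{j=0}^{d-1}\sqrt{s_j}\,\ket{j,j}
=\sum_{m,n}\bigl(XDY^T\bigr)_{mn}\ket{m,n}.
\]
Taking $X=Y=U$ with $U$ from the Autonne--Takagi factorization, the coefficient matrix on the right becomes $UDU^T=A$, so the right-hand side equals $\ket{a}$. Setting $V:=U$ completes the proof. The only obstacle is establishing (or citing) the Autonne--Takagi factorization; the tensor-to-matrix dictionary and the identification of symmetric vectors with symmetric matrices are bookkeeping.
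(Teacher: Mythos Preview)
Your proof is correct. The paper states Lemma~\ref{le:symmU=V} without proof, so there is nothing to compare against; your argument via the Autonne--Takagi factorization of the complex symmetric coefficient matrix $A=A^T$ is exactly the standard way to establish this fact, and the tensor-to-matrix bookkeeping you give (in particular the identity $(X\otimes Y)\sum_j\sqrt{s_j}\,\ket{j,j}=\sum_{m,n}(XDY^T)_{mn}\ket{m,n}$) is correct.
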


To conclude this section, we review some essential results from linear algebra.

\begin{lemma}
\label{le:hermi=posi}
For a given $n$-order Hermitian matrix $A$, we have

(i) $A$ is positive semidefinite if and only if all principal minors of $A$ are nonnegative;

(ii) If the first $n-1$ leading principal minors (respectively, the last $n-1$ trailing principal minors) of $A$ are positive and $\det A\ge0$, then $A$ is positive semidefinite.
\qed
\end{lemma}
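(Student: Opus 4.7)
The plan is to handle the two parts separately, both times translating the given principal-minor data into statements about the eigenvalues of the Hermitian matrix $A$.

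For part (i), the forward direction is immediate: if $A\succeq 0$, any principal submatrix $A_S$ obtained by restricting to a coordinate subspace is of the form $P^{*}AP$ and is thus itself PSD, so $\det A_S\ge 0$. For the converse I would work with the characteristic polynomial $\det(tI-A)=\sum_{k=0}^{n}(-1)^{k}e_{k}(A)t^{n-k}$, using the classical identity that $e_{k}(A)$ equals the sum of all $k\times k$ principal minors of $A$. Substituting $t=-s$ gives
\begin{equation*}
\det(-sI-A)=(-1)^{n}\Bigl(s^{n}+e_{1}(A)s^{n-1}+\cdots+e_{n}(A)\Bigr),
\end{equation*}
so the bracketed polynomial has nonnegative coefficients and is strictly positive for every $s>0$. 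Hence $A$ has no negative eigenvalue, and since $A$ is Hermitian this yields $A\succeq 0$.

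For part (ii), I would proceed by induction on $n$ using the Cauchy interlacing theorem. Let $A_{k}$ denote the $k\times k$ leading principal submatrix of $A$. The leading principal minors of $A_{n-1}$ are exactly the first $n-1$ leading principal minors of $A$; the first $n-2$ of them are positive by assumption, and the top one, $\det A_{n-1}$, is also positive by assumption. The inductive hypothesis (combined with the strict case, i.e., the usual Sylvester criterion) forces $A_{n-1}$ to be positive definite. Writing the eigenvalues as $\lambda_{1}\ge\cdots\ge\lambda_{n}$ for $A$ and $\mu_{1}\ge\cdots\ge\mu_{n-1}$ for $A_{n-1}$, Cauchy interlacing gives $\lambda_{i}\ge\mu_{i}>0$ for $i=1,\ldots,n-1$. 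Combining with $\det A=\lambda_{1}\cdots\lambda_{n}\ge 0$ forces $\lambda_{n}\ge 0$, so $A\succeq 0$. The trailing principal minors version follows by applying the same argument after reversing the order of the basis vectors via the permutation that sends $\ket{i}\mapsto\ket{n-1-i}$, since this conjugation is a unitary that turns trailing submatrices into leading ones.

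The main (and only real) obstacle is the boundary case $\det A=0$ in part (ii), where the usual strict Sylvester criterion cannot be invoked directly; Cauchy interlacing sidesteps this cleanly by guaranteeing that $n-1$ eigenvalues are already strictly positive, so that the sign of $\det A$ alone pins down the last eigenvalue. All other ingredients are routine bookkeeping relating elementary symmetric polynomials of eigenvalues to sums of principal minors.
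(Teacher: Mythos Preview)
Your argument is correct in both parts. Part (i) via the identity $e_k(A)=\sum_{|S|=k}\det A_S$ and the sign analysis of $\det(-sI-A)$ is the standard slick proof; part (ii) is handled cleanly by reducing to the strict Sylvester criterion for $A_{n-1}$ and then using Cauchy interlacing to isolate $\lambda_n$, which indeed bypasses the boundary issue $\det A=0$.

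As for comparison with the paper: there is nothing to compare. The paper records this lemma as a quoted fact from linear algebra and supplies no proof (the \qed\ appears immediately after the statement). Your write-up therefore goes beyond what the paper does, and any correct self-contained proof is acceptable here.
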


\begin{lemma}
\label{le:(m-1)(n-1)+1=prodvector}
The $((m-1)(n-1)+1)$-dimensional bipartite subspace in $\bbC^m\otimes\bbC^n$ has at least one product vector.
\qed
\end{lemma}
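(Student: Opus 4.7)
The plan is to reformulate the claim in projective geometry and invoke the classical projective dimension theorem. Let $W\subseteq\bbC^m\otimes\bbC^n$ be a subspace with $\dim W=(m-1)(n-1)+1$; the goal is to exhibit a nonzero product vector $\ket{a}\otimes\ket{b}\in W$. Passing to the projectivization $\bbP(\bbC^m\otimes\bbC^n)\cong\bbP^{mn-1}$, the subspace $W$ becomes a linear projective subspace $\bbP(W)$ of dimension $(m-1)(n-1)$, while the nonzero product vectors (modulo scalars) are exactly the points of the Segre variety
\begin{equation*}
\Sigma=\{[\ket{a}\otimes\ket{b}]:[a]\in\bbP^{m-1},\ [b]\in\bbP^{n-1}\}\subseteq\bbP^{mn-1},
\end{equation*}
which is the image of the Segre embedding $\bbP^{m-1}\times\bbP^{n-1}\to\bbP^{mn-1}$. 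This variety is irreducible and of dimension $(m-1)+(n-1)=m+n-2$.

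Next I would invoke the projective dimension theorem: if $X,Y\subseteq\bbP^N$ are closed subvarieties with $Y$ a linear subspace and $\dim X+\dim Y\ge N$, then $X\cap Y$ is nonempty. Taking $X=\Sigma$, $Y=\bbP(W)$ and $N=mn-1$, the dimensions add to
\begin{equation*}
(m+n-2)+(m-1)(n-1)=mn-1=N,
\end{equation*}
so the hypothesis holds, with equality. Therefore $\Sigma\cap\bbP(W)$ is nonempty, which yields the desired nonzero product vector lying in $W$.

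The main conceptual point that must be handled carefully is precisely the appeal to the projective dimension theorem: it is essential that we work over the algebraically closed field $\bbC$, since the corresponding real statement can fail. Beyond that, the proof is a routine dimension count and I do not anticipate further obstacles. If one wished to avoid algebraic geometry, an alternative is to identify $W$ with a subspace of $\bbC^{m\times n}$ of codimension $m+n-2$ and show, by induction on $\min\{m,n\}$ after successively quotienting by a generic row or column, that any such subspace contains a rank-one matrix. However, I expect the projective formulation above to be the cleanest presentation and to avoid any case analysis.
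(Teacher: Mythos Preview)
The paper does not supply its own proof of this lemma; it is stated as a preliminary fact with a terminal \qed and no argument, presumably because it is a well-known result from the literature. So there is no ``paper's proof'' to compare against.

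Your argument is correct and is in fact the standard one. The dimension count $(m+n-2)+(m-1)(n-1)=mn-1$ is right, the Segre variety is irreducible of the stated dimension, and the projective dimension theorem (e.g.\ Hartshorne, Theorem~I.7.2) applies over $\bbC$ to force $\Sigma\cap\bbP(W)\neq\emptyset$. Two minor remarks: you do not need the hypothesis that $Y$ be linear---the theorem holds for arbitrary closed subvarieties---and your observation that algebraic closedness is essential is apt, since over $\bbR$ there exist subspaces of the same dimension containing no product vector.
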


\section{Result}
\label{sec:res}

In this section, we investigate the 1-distillability of rank-five NPT states $\r$.  Lemma \ref{le:0001span} establishes that if $\rho$ is 1-undistillable, $\mathcal{R}(\r)$ cannot contain the two-dimensional subspace spanned by $\ket{00}$ and $\ket{01}$. This leads to Theorem \ref{th:rho=3x3symRank5}, which proves that a symmetric rank-five two-qutrit NPT state $\rho$ is 1-distillable if $\ker \rho$ contains a product vector, through analysis of three explicit cases.

We then construct a family of such states $\rho$, diagonal in five orthonormal states with positive eigenvalues $\l_1,...,\l_5$, via equation \eqref{eq:rho} and \eqref{eq:kete1}. The main results of this section are as follows: Lemma \ref{le:0001span} establishes that $\r$ is 1-distillable if for any $i \in \{1,2,3,4\}$, $\lambda_j = \frac{1 - \lambda_i}{4}$ for all $j \neq i$. For the case where $\l_1=\l_2=\l_3=\l_4=\frac{1-\l_5}{4}$, Proposition \ref{pr:l5=x} confirms 1-distillability for $\lambda_5 \in [0, \frac{24\sqrt{2} - 33}{7}) \cup (\frac{33 - 12\sqrt{6}}{25}, 1]$, while the status for $\lambda_5 \in [\frac{24\sqrt{2} - 33}{7}, \frac{33 - 12\sqrt{6}}{25}]$ remains open. Example \ref{ex:l5=x} provides a special case within this unresolved interval, showing that when $x = \frac{1}{7}$, $\rho$ is 1-undistillable under the assumptions of Proposition \ref{pr:l5=x}.

\begin{lemma}
\label{le:0001span}
Let $\rho$ be a two-qutrit NPT state of rank five that is 1-undistillable and satisfies the conditions in Lemmas \ref{le:mxnNPT=distill} and \ref{le:3x3NPT=oneUndis}. Then $\mathcal{R}(\rho)$ does not contain the two-dimensional subspace spanned by $\ket{00}$ and $\ket{01}$.
\end{lemma}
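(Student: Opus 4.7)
The plan is to argue by contradiction, using Lemma \ref{le:3x3NPT=oneUndis}(i) (no product vector in $\ker\rho$) together with the subspace criterion in Lemma \ref{le:(m-1)(n-1)+1=prodvector}. Assume $\mathcal{R}(\rho) \supseteq \mathrm{span}\{\ket{00},\ket{01}\}$. Since $\rho$ has rank five, $\ker\rho$ is four-dimensional, and orthogonality of $\mathcal{R}(\rho)$ and $\ker\rho$ forces
\[
\ker\rho \;\subseteq\; U := \mathrm{span}\{\ket{00},\ket{01}\}^{\perp} = \mathrm{span}\{\ket{02},\ket{10},\ket{11},\ket{12},\ket{20},\ket{21},\ket{22}\},
\]
a seven-dimensional subspace of $\mathbb{C}^3\otimes\mathbb{C}^3$. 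The objective is then to exhibit a product vector lying in $\ker\rho$ and so contradict Lemma \ref{le:3x3NPT=oneUndis}(i).

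Next, I would single out the six-dimensional subspace $V := \mathrm{span}\{\ket{1},\ket{2}\}_A \otimes \mathbb{C}^3_B$, which is contained in $U$ and carries a canonical $\mathbb{C}^2\otimes\mathbb{C}^3$ factorization; its product vectors are precisely the $(a\ket{1}+b\ket{2})\otimes\ket{y}$, and these are also product vectors in the ambient $\mathbb{C}^3\otimes\mathbb{C}^3$. The Grassmann dimension formula inside $U$ yields
\[
\dim(\ker\rho\cap V) \;\ge\; \dim\ker\rho + \dim V - \dim U \;=\; 4 + 6 - 7 \;=\; 3.
\]
Applying Lemma \ref{le:(m-1)(n-1)+1=prodvector} with $m=2$ and $n=3$, so that $(m-1)(n-1)+1=3$, every three-dimensional subspace of $V\cong\mathbb{C}^2\otimes\mathbb{C}^3$ contains a product vector. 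Hence $\ker\rho\cap V$ contains a product vector, giving a product vector of $\mathbb{C}^3\otimes\mathbb{C}^3$ inside $\ker\rho$ and contradicting Lemma \ref{le:3x3NPT=oneUndis}(i).

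The only nontrivial step is the choice of the auxiliary subspace $V$: the seven-dimensional $U$ is itself not a tensor-product subspace, so Lemma \ref{le:(m-1)(n-1)+1=prodvector} cannot be invoked on $U$ directly. Carving out the largest $\mathbb{C}^2\otimes\mathbb{C}^3$ block that sits inside $U$, namely by suppressing the $A$-direction $\ket{0}_A$, is exactly what pushes the intersection dimension up to the critical value $3$ needed for Lemma \ref{le:(m-1)(n-1)+1=prodvector}. I anticipate no other difficulty; notably, beyond the rank-five assumption and the no-product-vector condition in $\ker\rho$, none of the finer information from Lemma \ref{le:mxnNPT=distill} or from Lemma \ref{le:3x3NPT=oneUndis}(ii) is required.
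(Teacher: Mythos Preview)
Your proof is correct and follows essentially the same approach as the paper: both arguments intersect $\ker\rho$ with the $\mathbb{C}^2\otimes\mathbb{C}^3$ block $V=\mathrm{span}\{\ket{1},\ket{2}\}_A\otimes\mathbb{C}^3_B$, show this intersection has dimension at least $3$, and then invoke Lemma~\ref{le:(m-1)(n-1)+1=prodvector}. The only difference is packaging---the paper obtains the dimension bound by writing an explicit basis of $\mathcal{R}(\rho)$ and taking an orthogonal complement inside $V$, whereas your Grassmann-formula computation inside $U$ reaches the same conclusion more directly.
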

\begin{proof}
In contrast, assume $\mathcal{R}(\rho)$ contains the two-dimensional subspace $S$ spanned by $\ket{00}$ and $\ket{01}$. Let ${\ket{00}, \ket{01}, \ket{a}, \ket{b}, \ket{c}}$ be a basis for $\mathcal{R}(\rho)$.
We can express the vectors $\ket{a}, \ket{b}, \ket{c}$ in the following form
\begin{align*} 
    \ket{a}&=\ket{0x_1}+\ket{1x_2}+\ket{2x_3},\\
    \ket{b}&=\ket{0y_1}+\ket{1y_2}+\ket{2y_3},\\
    \ket{c}&=\ket{0z_1}+\ket{1z_2}+\ket{2z_3},
\end{align*}
for some vectors $\ket{x_i}, \ket{y_i}, \ket{z_i} \in \mathbb{C}^3$ ($i = 1,2,3$).

Now we consider the set of vectors
\[\{\ket{1x_2}+\ket{2x_3},\ket{1y_2}+\ket{2y_3},\ket{1z_2}+\ket{2z_3}\}\subset \bbC^2\otimes \bbC^3\]
The dimension of its orthogonal complement $V$ within $\mathbb{C}^2 \otimes \mathbb{C}^3$ is at least $3$. By Lemma \ref{le:(m-1)(n-1)+1=prodvector}, the subspace $V$ must contain at least one product vector. 

Let $\ket{v} \in V$ be such a product vector. We can write it as 
\[\ket{v} = (m\ket{1} + n\ket{2}) \otimes \ket{w},\] 
where $m,n\in\bbC$ and $\ket{w} \in \mathbb{C}^3$.

By construction, $\ket{v}$ is orthogonal to each of the vectors $\ket{1x_2} + \ket{2x_3}$, $\ket{1y_2} + \ket{2y_3}$, and $\ket{1z_2} + \ket{2z_3}$. Moreover, $\ket{v}$ is clearly orthogonal to $\ket{0} \otimes \mathbb{C}^3$. It follows that $\ket{v}$ is orthogonal to the entire basis $\{{\ket{00}, \ket{01}, \ket{a}, \ket{b}, \ket{c}}\}$ of $\mathcal{R}(\rho)$. Therefore, $\ket{v} \in \ker \rho$.

However, this contradicts Lemma \ref{le:3x3NPT=oneUndis}, which asserts that the kernel of a 1-undistillable two-qutrit NPT state of rank five cannot contain any product vector. It completes the proof.
\end{proof}

\begin{theorem}
\label{th:rho=3x3symRank5}
Let $\r$ be a two-qutrit symmetric NPT state of rank five. Then $\r$ is 1-distillable when $\ker \rho$ has a product vector. This condition is satisfied in any of the following three cases

(i) $\mathcal{R}(\rho)$ is spanned by $\ket{00},\ket{11},\ket{01}+\ket{10},\ket{02}+\ket{20}$ and $\ket{12}+\ket{21}$;  

(ii) $\mathcal{R}(\rho)$ is spanned by $\ket{00},\ket{11},\ket{22},\ket{02}+\ket{20}$, and $\ket{12}+\ket{21}$;

(iii) $\ker \rho$ has a symmetric pure state of Schmidt rank at most two.

Conversely, $\ker \rho$ contains no product vector if it contains a symmetric pure state of Schmidt rank three.
\end{theorem}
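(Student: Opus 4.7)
My plan is to observe first that Lemma \ref{le:3x3NPT=oneUndis}(i) directly implies the main statement: a two-qutrit NPT state whose kernel contains a product vector cannot be 1-undistillable, so it must be 1-distillable. The real work is therefore to produce a product vector in $\ker \r$ in each of the three listed cases, and to establish the converse by showing that a symmetric Schmidt-rank-three vector in $\ker \r$ excludes all product vectors.

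Since $\r$ is symmetric, $\mathcal{R}(\r)$ lies in the $6$-dimensional symmetric subspace of $\bbC^3 \otimes \bbC^3$, so $\ker \r$ contains the entire $3$-dimensional antisymmetric subspace spanned by $\ket{01}-\ket{10}$, $\ket{02}-\ket{20}$, $\ket{12}-\ket{21}$. Combined with $\dim \ker \r = 4$, this forces $\ker \r$ to equal this antisymmetric subspace plus exactly one symmetric direction $\ket{v}$, unique up to scalar. Cases (i) and (ii) now follow by direct orthogonality checks: $\ket{22}$ is orthogonal to the five spanning vectors of $\mathcal{R}(\r)$ in (i), and $\ket{01}$ is orthogonal to the five spanning vectors in (ii), so each kernel explicitly contains a product vector.

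For case (iii), if $\ket{v}$ has Schmidt rank one it is already a product vector. If its Schmidt rank is two, I apply Lemma \ref{le:symmU=V} to bring $\ket{v}$, via a local unitary $U\ox U$ that preserves distillability, to the form $\sqrt{s_0}\ket{00} + \sqrt{s_1}\ket{11}$ with $s_0, s_1 > 0$. The candidate product vector will be $\ket{p} := \ket{v} + i(s_0 s_1)^{1/4}(\ket{01}-\ket{10})$; it lies in $\ker \r$ by construction, and its $3\times 3$ coefficient matrix has a non-zero rank-one $2\times 2$ top-left block (the off-diagonal product $-i^2\sqrt{s_0 s_1} = \sqrt{s_0 s_1}$ matches the product of the diagonal entries) and vanishes elsewhere, so $\ket{p}$ is a product vector.

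For the converse, I again invoke Lemma \ref{le:symmU=V} to assume $\ket{v} = \sum_{i=0}^{2}\sqrt{s_i}\ket{ii}$ with all $s_i > 0$, and test whether an arbitrary element $\alpha\ket{v} + \beta(\ket{01}-\ket{10}) + \gamma(\ket{02}-\ket{20}) + \delta(\ket{12}-\ket{21})$ of $\ker \r$ can be a product vector. The case $\alpha = 0$ is eliminated because any non-zero skew-symmetric matrix has even rank, so Schmidt rank one is impossible. For $\alpha \neq 0$ I rescale to $\alpha = 1$ and impose rank one on the associated $3\times 3$ matrix: its three principal $2\times 2$ minors give $\beta^2 = -\sqrt{s_0 s_1}$, $\gamma^2 = -\sqrt{s_0 s_2}$, $\delta^2 = -\sqrt{s_1 s_2}$, while the rows $(1,2)$, columns $(1,3)$ minor yields $\sqrt{s_0}\,\delta + \beta\gamma = 0$; squaring the last equation and substituting forces $-s_0\sqrt{s_1 s_2} = s_0\sqrt{s_1 s_2}$, contradicting positivity. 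The main obstacle I anticipate is pinning down the precise antisymmetric combination to add to $\ket{v}$ in case (iii)—the coefficient $i(s_0 s_1)^{1/4}$ is not obvious a priori, but once it is in hand both directions reduce to the same elementary rank-one computation on the coefficient matrix.
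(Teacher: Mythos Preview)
Your proposal is correct and follows essentially the same route as the paper: invoke Lemma~\ref{le:3x3NPT=oneUndis}(i) for the main implication, exhibit $\ket{22}$ and $\ket{01}$ as kernel product vectors in (i) and (ii), apply Lemma~\ref{le:symmU=V} in (iii) and in the converse, and derive the contradiction from the $2\times 2$ minors of the coefficient matrix. Your version is in fact slightly more explicit than the paper's in two places---you write down the product vector $\ket{v}+i(s_0s_1)^{1/4}(\ket{01}-\ket{10})$ in case (iii) where the paper only says one ``can be verified'' to exist, and you separately dispose of the purely antisymmetric ($\alpha=0$) case in the converse, which the paper's parametrisation silently assumes away.
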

\begin{proof}
\textbf{(i)} $\ker \rho$ can be verified to contain the product vector $\ket{22}$. Then $\rho$ is 1-distillable by Lemma \ref{le:3x3NPT=oneUndis}(i).

\textbf{(ii)} $\ker \rho$ can be verified to contain the product vector $\ket{01}$. Then $\rho$ is 1-distillable by Lemma \ref{le:3x3NPT=oneUndis}(i).

It should be noted that the NPT condition may not be assumed separately for (i) since the state $\rho$ in (i) is NPT by construction. In contrast, the NPT assumption is essential for (ii), since symmetric PPT states satisfying (ii) do exist.

\textbf{(iii)} Let $\ket{a}$ be such a pure state. By Lemma \ref{le:symmU=V}, there exists a product unitary operator $U_1 \otimes V_1$ such that 
\[(U_1 \otimes V_1)\ket{a} = \cos\theta \ket{00} + \sin\theta \ket{11},\] 
and the kernel of $(U_1 \otimes V_1) \rho (U_1 \otimes V_1)^\dagger$ is spanned by the antisymmetric subspace and $(U_1 \otimes V_1)\ket{a}$. The kernel can be verified to contain a product vector, and thus Lemma \ref{le:3x3NPT=oneUndis}(i) applies.

To prove the converse statement, suppose $\ker \rho$ contains a symmetric pure state $\ket{a}$ of Schmidt rank three, implying all Schmidt coefficients $s_j > 0$. By Lemma \ref{le:symmU=V}, we can transform $\ket{a}$ into a diagonal form
\[\ket{a}=(U_2\otimes V_2)\S_{j=0}^{2}\sqrt{s_j}\ket{j,j}.\]
Application of $U_2^{-1} \otimes V_2^{-1}$ to any vector $\ket{v} \in \ker \rho$ yields
\begin{eqnarray}
\label{eq:ketv}
\ket{v}=\a(\ket{01}-\ket{10})+\b(\ket{12}-\ket{21})+\g(\ket{02}-\ket{20})+\S_{j=0}^{2}\sqrt{s_j}\ket{j,j}.    
\end{eqnarray}
Let $\ket{v}=\S_{i,j}a_{ij}\ket{i,j}$. The coefficient matrix of $\ket{v}$ is
\[A=(a_{ij})=\left(
\begin{matrix}
    \sqrt{s_0}& \a& \g\\
    -\a& \sqrt{s_1}& \b\\
    -\g& -\b& \sqrt{s_2}
\end{matrix}
\right).\]
If $\ket{v}$ is a product vector, then $A$ must have rank one. So all $2 \times 2$ minors of $A$ vanish. This yields
\begin{align*}
\alpha^2 &= -\sqrt{s_0 s_1}, \\
\beta^2 &= -\sqrt{s_1 s_2}, \\
\gamma^2 &= -\sqrt{s_0 s_2}.
\end{align*}
Moreover, the following condition is obtained
\begin{align}
\label{eq:detleft}
    \det\left(
\begin{matrix}
    a_{00}& a_{01}\\
    a_{20}& a_{21}
\end{matrix}
\right)=-\b\sqrt{s_0}+\a\g=0.
\end{align}
Rearranging equation \eqref{eq:detleft} and squaring both sides gives
\[\a^2\g^2=\b^2s_0,\]
which implies
\[s_0\sqrt{s_1s_2}=0.\]
This contradicts the assumption that all $s_j > 0$. Therefore, no product vector $\ket{v}$ in \eqref{eq:ketv} exists in $\ker \rho$.
\end{proof}

To extend Theorem \ref{th:rho=3x3symRank5}, let $\rho$ be a two-qutrit symmetric NPT state of rank five that does not satisfy the preconditions of Lemmas \ref{le:mxnNPT=distill}, \ref{le:3x3NPT=oneUndis} or Theorem \ref{th:rho=3x3symRank5}. The question of whether such $\rho$ is 1-distillable remains a challenge. As the main focus of this paper, we consider states of the form
\begin{equation}
\label{eq:rho}
    \r=\S_{j=1}^5\lambda_j\ket{e_j}\bra{e_j},
\end{equation}
where the eigenvectors 
\begin{equation}
\label{eq:kete1}
    \begin{cases}
        \ket{e_1}&=\frac{\ket{01}+\ket{10}}{\sqrt{2}},\\
        \ket{e_2}&=\frac{\ket{12}+\ket{21}}{\sqrt{2}},\\
        \ket{e_3}&=\frac{\ket{02}+\ket{20}}{\sqrt{2}},\\
        \ket{e_4}&=\frac{\ket{00}-\ket{11}}{\sqrt{2}},\\
        \ket{e_5}&=\frac{\ket{00}+\ket{11}-2\ket{22}}{\sqrt{6}}.
    \end{cases}
\end{equation}
A natural approach for simplifying our problem is to set one eigenvalue $\lambda_j$ as a free variable and assign the remaining four eigenvalues as equal, i.e., $\lambda_i = (1 - \lambda_j)/4$ for $i \neq j$. This leads to the following five cases,
\begin{align*}
&(i) &\l_1=x,\ \l_2=\l_3=\l_4=\l_5=\frac{1-x}{4};\\
&(ii) &\l_4=x,\ \l_1=\l_2=\l_3=\l_5=\frac{1-x}{4};\\
&(iii) &\l_2=x,\ \l_1=\l_3=\l_4=\l_5=\frac{1-x}{4};\\
&(iv) &\l_3=x,\ \l_1=\l_2=\l_4=\l_5=\frac{1-x}{4};\\
&(v) &\l_5=x,\ \l_1=\l_2=\l_3=\l_4=\frac{1-x}{4}.
\end{align*}

We assert that cases (i) and (ii) are equivalent under local unitary transformations, and hence the states in both cases are 1-distillable. This equivalence follows from the observations
\begin{equation}
\begin{cases}
    K^{\otimes2}\ket{e_1}=\ket{e_4},\\
    K^{\otimes2}\ket{e_4}=\ket{e_1},\\
    K^{\otimes2}\ket{e_5}=\ket{e_5},\\
    K^{\otimes2}(\proj{e_2}+\proj{e_3})K^{\otimes2}=(\proj{e_2}+\proj{e_3}),
\end{cases}
\end{equation}
where $K$ is the $3 \times 3$ orthogonal matrix given by
\begin{eqnarray}
K={1\over\sqrt2}
\bma
1 & 1 \\
1 & -1 \\
\ema
\oplus I_1.
\end{eqnarray}
Similarly, cases (iii) and (iv) are also equivalent. In the following lemmas, we analyze cases (i), (iii), and (v) in detail. The first two cases turn out to yield distillable states, while the distillability of case (v) remains partially open.

\begin{lemma}
\label{le:l12=x}
Let $\rho$ be a two-qutrit symmetric NPT state of rank five, with an orthonormal basis $\{\ket{e_1}, \ldots, \ket{e_5}\}$ for $\mathcal{R}(\rho)$ given explicitly by
\begin{align}   
\ket{e_1} &= \frac{\ket{01} + \ket{10}}{\sqrt{2}}, \quad
\ket{e_2} = \frac{\ket{12} + \ket{21}}{\sqrt{2}}, \quad
\ket{e_3} = \frac{\ket{02} + \ket{20}}{\sqrt{2}}, \notag \\
\ket{e_4} &= \frac{\ket{00} - \ket{11}}{\sqrt{2}}, \quad
\ket{e_5} = \frac{\ket{00} + \ket{11} - 2\ket{22}}{\sqrt{6}}.
\end{align}
Suppose $\rho$ has the spectral decomposition $\rho = \sum_{j=1}^5 \lambda_j \proj{e_j}$, where there exists $i \in \{1, 2\}$ such that $\lambda_i = x\in(0,1)$ and $\lambda_j = \frac{1 - x}{4}$ for all $j \in\{1,2,3,4,5\}\backslash\{i\}$. Then $\rho$ is 1-distillable.
\end{lemma}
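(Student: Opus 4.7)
The plan is to exhibit, in each of cases (i) and (iii), two Schmidt-rank-two witnesses whose negativity intervals in $x$ jointly cover the entire NPT range of $\rho$. The starting point is to write $\rho = (x-y)\proj{e_i} + yP$ with $y=(1-x)/4$ and $P = \sum_{j=1}^{5}\proj{e_j}$, so that $\rho^\G = (x-y)(\proj{e_i})^\G + yP^\G$. Using the swap $F$ on $\bbC^3\otimes\bbC^3$ together with the identities $F^\G = 3\proj{\Phi^+}$ and $(\proj{\Phi^+})^\G = F/3$, where $\ket{\Phi^+} = (\ket{00}+\ket{11}+\ket{22})/\sqrt{3}$, one gets $P^\G = \tfrac12 I + \tfrac32\proj{\Phi^+} - \tfrac13 F$. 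Both summands of $\rho^\G$ then respect the orthogonal decomposition $\cH = W_1\oplus W_2\oplus W_3\oplus W_4$ with $W_1 = \mathrm{span}\{\ket{00},\ket{11},\ket{22}\}$, $W_2 = \mathrm{span}\{\ket{01},\ket{10}\}$, $W_3 = \mathrm{span}\{\ket{02},\ket{20}\}$, and $W_4 = \mathrm{span}\{\ket{12},\ket{21}\}$, so $\rho^\G$ is block-diagonal along these four subspaces.

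Next I would test the natural candidate witnesses in the two blocks that can carry a negative eigenvalue. For case (i) take $\ket{\psi_1} = \ket{e_1}\in W_2$ and $\ket{\psi_2} = \ket{e_4}\in W_1$; for case (iii) take $\ket{\psi_1} = \ket{e_2}\in W_4$ and $\ket{\psi_2} = (\ket{11}-\ket{22})/\sqrt{2}\in W_1$. All four vectors are visibly Schmidt-rank-two. Restricting $\rho^\G$ to the relevant pair block gives the $2\times 2$ matrix with entries $x/2$ on the diagonal and $-y/3$ off the diagonal, so $\bra{\psi_1}\rho^\G\ket{\psi_1} = x/2 - y/3$, which is negative iff $x < 1/7$. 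Restricting instead to $W_1$ gives a $3\times 3$ matrix whose only $x$-dependent off-diagonal entry is $x/2$, and a short calculation yields $\bra{\psi_2}\rho^\G\ket{\psi_2} = 2y/3 - x/2$, negative iff $x > 1/4$.

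To close the argument it remains to show that the NPT assumption forces $x\in(0,1/7)\cup(1/4,1)$, i.e., into the union of the two witness intervals. The two pair blocks untouched by $(\proj{e_i})^\G$ reduce to the manifestly positive $y\bigl(\begin{smallmatrix} 1/2 & -1/3 \\ -1/3 & 1/2 \end{smallmatrix}\bigr)$. The pair block that does carry $x$ is positive semidefinite iff $x\ge 1/7$. For $W_1$ I invoke Lemma \ref{le:hermi=posi}: its diagonal entries $2y/3$ are positive, its leading $2\times 2$ minor $4y^2/9 - x^2/4$ is nonnegative iff $x\le 1/4$, and its determinant---a quadratic in $x$ after substituting $y=(1-x)/4$---has roots exactly at $x=1/25$ and $x=1/4$ and is therefore nonnegative on $[1/25,1/4]\supset[1/7,1/4]$. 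Hence $\rho^\G\ge 0$ on $[1/7,1/4]$, the NPT hypothesis excludes this plateau, and the two witnesses above jointly certify 1-distillability on the remaining NPT range. The most delicate point, and the one I would be careful about, is controlling the $W_1$ block on the PPT plateau: the full determinant calculation is what pins down the NPT-witness correspondence and ensures no negative eigenvalue is silently overlooked.
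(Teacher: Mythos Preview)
Your argument is correct and complete, and it follows a genuinely different route from the paper's. The paper establishes the same NPT range $x\in(0,\tfrac17)\cup(\tfrac14,1)$ and then proves 1-distillability by introducing a parametric local filter $A(y)=\ket{0}\bra{0}+\ket{1}(\bra{1}+y\bra{2})$ on system $A$, arguing via Lemma~\ref{le:hermi=posi} that for each such $x$ some choice of $y$ makes $(A\otimes I)\rho(A^\dagger\otimes I)$ an NPT $2\times3$ state, whence Lemma~\ref{le:mxnNPT=distill}(i) applies. You instead block-diagonalize $\rho^\Gamma$ along the swap-invariant decomposition $W_1\oplus\cdots\oplus W_4$ and read off explicit Schmidt-rank-two witnesses directly from the definition: $\ket{e_1}$ (resp.\ $\ket{e_2}$) for $x<\tfrac17$, and $\ket{e_4}$ (resp.\ $(\ket{11}-\ket{22})/\sqrt2$) for $x>\tfrac14$. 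Your determinant calculation on the $W_1$ block, with roots $\tfrac{1}{25}$ and $\tfrac14$, cleanly pins down the PPT plateau $[\tfrac17,\tfrac14]$ and closes the argument without any free parameter or auxiliary $2\times3$ reduction. The paper's filtering device is less transparent here, but it is the same tool deployed in Proposition~\ref{pr:l5=x} for the harder case $\lambda_5=x$, where no such closed-form witnesses are available; your method trades that uniformity for a more structural and self-contained proof in the present case.
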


\begin{proof}
The proof follows a similar approach for both $i=1$ and $i=2$. We only focus on the case $i=1$. With this case, we begin by computing the partial transpose $\rho^\Gamma$ and its eigenvalues. The condition that $\rho^\Gamma$ has at least one negative eigenvalue, which is equivalent to $\rho$ being NPT, is found to be
\begin{equation*}
    x\in(0,\frac{1}{7})\cup(\frac{1}{4},1).
\end{equation*}
For such values of $x$, we construct the operator
\[A=\ket{0}\bra{0}+\ket{1}(\bra{1}+y\bra{2})\] with $y \in \mathbb{C}$, and consider the projected state
\[\s=(A\otimes I)\r(A^\dg \otimes I).\] 
By Lemma \ref{le:hermi=posi}, there exists a choice of $y$ such that $\sigma$ is NPT for each $x \in (0, \tfrac{1}{7}) \cup (\tfrac{1}{4}, 1)$. This implies that $\rho$ is 1-distillable for all such $x$.
\end{proof}

Lemma \ref{le:l12=x} addresses the case where the free eigenvalue parameter $x$ is assigned to one of the vectors $\ket{e_1}$, $\ket{e_2}$, $\ket{e_3}$ or $\ket{e_4}$. Then we examine the alternative configuration where the distinguished eigenvalue $x$ is associated with the vector $\ket{e_5}$. It will be more special and complex than Lemma \ref{le:l12=x}.

\begin{proposition}
\label{pr:l5=x}
Let $\rho$ be a two-qutrit symmetric NPT state of rank five, with an orthonormal basis $\{\ket{e_1}, \ldots, \ket{e_5}\}$ for $\mathcal{R}(\rho)$ given explicitly by
\begin{align}   
\ket{e_1} &= \frac{\ket{01} + \ket{10}}{\sqrt{2}}, \quad
\ket{e_2} = \frac{\ket{12} + \ket{21}}{\sqrt{2}}, \quad
\ket{e_3} = \frac{\ket{02} + \ket{20}}{\sqrt{2}}, \notag \\
\ket{e_4} &= \frac{\ket{00} - \ket{11}}{\sqrt{2}}, \quad
\ket{e_5} = \frac{\ket{00} + \ket{11} - 2\ket{22}}{\sqrt{6}}.
\end{align}
Suppose $\rho$ has the spectral decomposition $\rho = \sum_{j=1}^5 \lambda_j \proj{e_j}$, where $\lambda_5 = x$ and $\lambda_1 = \lambda_2 = \lambda_3 = \lambda_4 = \frac{1 - x}{4}$ for some $x \in (0,\frac{24\sqrt{2}-33}{7})\cup(\frac{33-12\sqrt{6}}{25},1)$. Then $\rho$ is 1-distillable.
\end{proposition}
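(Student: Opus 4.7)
The plan is to follow the filtration strategy of Lemma \ref{le:l12=x}: construct a rank-two operator $A$ on system $A$ and show that the filtered state $\s=(A\otimes I)\r(A^\dg\otimes I)$ is NPT, which is a direct witness of 1-distillability of $\r$. First, I would compute $\r^\G$ explicitly in the computational basis. Since $\r$ has only two distinct eigenvalues, $x$ and $(1-x)/4$, and $\{\ket{e_j}\}_{j=1}^5$ together with the three antisymmetric two-qutrit singlets span $\bbC^3\otimes\bbC^3$, $\r^\G$ decomposes into low-dimensional invariant blocks whose eigenvalues are computable in closed form. This both confirms that $\r$ is NPT throughout the stated intervals and isolates which eigenvector of $\r^\G$ is responsible for the negative eigenvalue in each regime.

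Second, I would propose one-parameter ansatzes for $A$. The remark that this case is ``more special and complex'' than Lemma \ref{le:l12=x} suggests that no single ansatz covers both sides simultaneously, so I expect two distinct filters: one tailored to the small-$x$ regime, where the four equal eigenvalues $\tfrac{1-x}{4}$ dominate, and one tailored to the large-$x$ regime, where $\proj{e_5}$ dominates and is the only spectral component that charges $\ket{22}$. A natural starting pair is $A_1=\ket{0}\bra{0}+\ket{1}(\bra{1}+y\bra{2})$, as in Lemma \ref{le:l12=x}, together with a second ansatz that brings the $\ket{2}$ direction on system $A$ more substantially into the range, for instance a rotation of $A_1$ by a local unitary or a filter of the form $A_2=\ket{0}(\bra{0}+z\bra{2})+\ket{1}\bra{1}$.

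Third, for each ansatz I would compute $\s^\G$ and apply Lemma \ref{le:hermi=posi}(i): $\s$ is NPT if and only if some principal minor of $\s^\G$ is negative. Because $\s$ is a $2\times3$ state, $\s^\G$ is only a $6\times6$ Hermitian matrix and the relevant minor computations are tractable. I would write the key minor (most likely a $2\times 2$ or $3\times 3$ one whose sign changes) as an explicit polynomial in $x$ and the filter parameter, optimize over the parameter to obtain a sharp sufficient condition on $x$, and verify that the optimized inequality holds on each of the two stated intervals. The cutoffs $\tfrac{24\sqrt{2}-33}{7}$ and $\tfrac{33-12\sqrt{6}}{25}$ should appear as roots of the optimized polynomial in $x$; their form, involving $\sqrt{2}$ and $\sqrt{6}$, signals that they arise from quadratic discriminants produced by the minimization step.

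The main obstacle is algebraic bookkeeping: producing two ansatzes whose optimized minors, as functions of $x$, vanish precisely at the stated algebraic numbers requires a careful choice of parametrization and a nontrivial minimization in the filter parameter. Since Example \ref{ex:l5=x} exhibits a 1-undistillable $\r$ for $x=\tfrac{1}{7}$, which lies inside the gap $[\tfrac{24\sqrt{2}-33}{7},\tfrac{33-12\sqrt{6}}{25}]$, no rank-two filter of any form can close this gap, so the result is inherently partial within that interval and the best one can hope for is to saturate its endpoints exactly.
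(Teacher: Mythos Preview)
Your plan for the small-$x$ interval $(0,\tfrac{24\sqrt{2}-33}{7})$ is exactly what the paper does: the single filter $A=\ket{0}\bra{0}+\ket{1}(\bra{1}+y\bra{2})$ is applied, and one shows that for $x$ below $\tfrac{24\sqrt{2}-33}{7}$ some $2\times 2$ principal minor of $\s^\G$ can be made negative by choosing $y$. The threshold $\tfrac{24\sqrt{2}-33}{7}$ indeed arises from optimizing these minors over $y$.

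Where your proposal diverges from the paper is the large-$x$ side. You anticipate a second filter $A_2$; the paper instead never introduces one. A direct eigenvalue computation of $\r^\G$ shows that the NPT region is $x\in(0,\tfrac{33-12\sqrt{6}}{25})\cup(\tfrac{3}{11},1)$, and that on the entire component $x\in(\tfrac{3}{11},1)$ the matrix $\r^\G$ has at least \emph{two} negative eigenvalues. Lemma~\ref{le:3x3NPT=oneUndis}(ii) then disposes of this whole range at once: a $1$-undistillable two-qutrit NPT state must have exactly one negative eigenvalue in $\r^\G$, so two or more forces $1$-distillability without any filtration. The number $\tfrac{33-12\sqrt{6}}{25}$ is therefore not a cutoff produced by optimizing a second filter, as you conjecture, but simply the right edge of the left NPT component; since the proposition hypothesizes that $\r$ is NPT, the portion $(\tfrac{33-12\sqrt{6}}{25},\tfrac{3}{11})$ is vacuous. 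This also explains why the gap $[\tfrac{24\sqrt{2}-33}{7},\tfrac{33-12\sqrt{6}}{25}]$ is exactly what remains: left of it the single filter works, right of it $\r$ is already PPT.

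One more remark: your candidate $A_2=\ket{0}(\bra{0}+z\bra{2})+\ket{1}\bra{1}$ is the image of $A_1$ under the $0\leftrightarrow 1$ swap, and because $\l_1=\l_2=\l_3=\l_4$ the state $\r$ is invariant under the simultaneous swap on both factors. So $A_2$ would reproduce the same minor inequalities as $A_1$ and would not extend the range; a genuinely new ansatz would be needed if you wanted to pursue the filter route on the large-$x$ side.
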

\begin{proof}
We begin with the expression for $\rho$
\begin{equation*}
    \r=x\ket{e_5}\bra{e_5}+\frac{1-x}{4}\S_{j=1}^4\ket{e_j}\bra{e_j}.
\end{equation*}
By calculation, the condition for $\rho^\Gamma$ to have at least one negative eigenvalue (i.e., for $\rho$ to be NPT) is
\[x \in (0, \frac{33-12\sqrt{6}}{25}) \cup (\frac{3}{11}, 1),\] 
and the condition for at least two negative eigenvalues (i.e., for $\rho$ to be 1-distillable) is
\[x \in (\frac{3}{11}, 1).\]
We now consider the projection operator
\[A = \ket{0}\bra{0} + \ket{1}(\bra{1} + y\bra{2})\]
with $y \in \mathbb{C}$, and define the transformed state
\[\s=(A\otimes I)\r(A^\dg \otimes I).\]
All first-order principal minors of $\s^\G$ are nonnegative. For $\s^\G$ to be positive semidefinite, all second-order principal minors must be positive
\begin{equation}
\label{eq:l5=x}
    \begin{cases}
        \frac{x(1-x)}{48}+\frac{x^2|y|^2}{9}+\frac{(1-x)^2}{64}+\frac{x(1-x)|y|^2}{12}-\frac{(1-x)^2|y|^2}{64}\ge0,\\
        \frac{(1-x)^2(1+|y|^2)}{64}-\frac{x^2}{36}+\frac{x(1-x)}{24}-\frac{(1-x)^2}{64}\ge0,\\
        \frac{(1-x)^2(1+|y|^2)}{64}-\frac{x^2|y|^2}{9}\ge0.
    \end{cases}
\end{equation}
For these inequalities to hold for all $y$, the following conditions must be met
\[\frac{24\sqrt{2}-33}{7}\le x \le \frac{3}{11}.\]
Thus, for $x \in [\frac{24\sqrt{2}-33}{7}, \frac{33-12\sqrt{6}}{25}] \approx [0.134, 0.144]$, $\rho$ is NPT but its 1-distillability cannot be established using the projection operator $A \otimes I$.
\end{proof}

In spite of above results with distillable states, we provide an example of 1-undistillable state below. We also conjecture that the state is undistillable with any copies. 

\begin{example}
\label{ex:l5=x}
The given $\r$ in Proposition \ref{pr:l5=x} is $1$-undistillable if $x=\frac{1}{7}\in(\frac{24\sqrt{2}-33}{7},\frac{33-12\sqrt{6}}{25})$. We define the following transformation matrices
\[K=\dfrac{1}{\sqrt{2}}
\begin{pmatrix}
    1 & i \\
    1 & -i \\
\end{pmatrix},\ \ \ 
\overline{K}=\dfrac{1}{\sqrt{2}}
\begin{pmatrix}
    1 & -i \\
    1 & i \\
\end{pmatrix}.\]
Then we transform $\rho$ into a simpler form 
\begin{equation}
    \widetilde{\rho}:=(K\otimes \overline{K})\r(K\otimes \overline{K})^\dg.
\end{equation}
Based on PGL equivalence and the constraints that ensure the inertia of $\rho^\Gamma$ is $(1,0,8)$, the projection $P$ on subsystem $A$ can be restricted to one of the following two forms
\begin{eqnarray}
P_1=\bma 
1 & a & 0 \\
0 & 0 & 1 \\
0 & 0 & 0
\ema,    
\ \ \ 
P_2=\bma 
1 & 0 & b \\
0 & 1 & c \\
0 & 0 & 0
\ema,   
\end{eqnarray}
where $a, b, c \in \mathbb{C}$ are parameters to be chosen such that the projected matrix is not positive semidefinite. We will sequentially prove that $\widetilde{\rho}$ is 1-undistillable in both cases.
\end{example}
\begin{proof}
    Continuing from Example \ref{ex:l5=x}, we substitute $x = \frac{1}{7}$ into $\widetilde{\rho}$ and define the resulting matrices
\begin{eqnarray}
&\a_i:=(P_i\otimes I_3) (\widetilde{\rho}_{x=\frac{1}{7}})^\Gamma (P_i^\dg \otimes I_3), \ i=1,2.
\end{eqnarray}
By Lemma \ref{le:hermi=posi}, $\alpha_1$ is positive semidefinite for all $a$. For $\alpha_2$, we apply Lemma \ref{le:hermi=posi}(ii) and examine its leading principal minors. The first three minors are readily verified to be positive. The fourth minor of $\alpha_2$, denoted as $\alpha_{2}^{4}$, is
\begin{align}
    \nonumber
    \a_{2}^{4}=&\frac{1}{5531904}[737 + 268b^2 + 648|b|^6 + 715|c|^2 \\
    \nonumber
    &+ |b|^4(1184 + 63|c|^2) 
    + \overline{b}(b(1427 + 324b^2) \\
    &+ 4\overline{b}(67 + 81|b|^2) + 778b|c|^2)].
\end{align}
$\alpha_{2}^{4}$ is positive for all complex $b, c$, and in particular never vanishes. 

The fifth minor of $\alpha_2$, denoted as $\alpha_{2}^{5}$, is
\begin{align}
\nonumber
\a_{2}^{5}=&\frac{1}{464679936}(536 (5 + b^2) \\
\nonumber
&+ |b|^2 (8533 + 648b^2 - 504c^2)+ 9 \left|c\right|^4 (715 + 63|b|^2)\\
\nonumber
&+ \overline{b}^2 (536 + 6868b^2 + 4095c^2 + 81|b|^2 (8 + 7b^2 - 7c^2))\\
\nonumber
&+ |c|^2 (9233 + 2412b^2 + 4788 \left|b\right|^4 + 7388|b|^2 + 2412\overline{b}^2)\\
&- 18b (-65b + 9b \left|b\right|^2 + 8\overline{b}) \overline{c}^2).
\end{align}
By calculation, $\alpha_{2}^{5}$ is positive for all real $b$ and $c$. 

Finally, the determinant of $\alpha_2$ is
\begin{align}
    \nonumber
    \det\a_{2}=&\frac{1}{39033114624}(585b(8 + 7b^2 - 7c^2)\overline{b}^3 \\
    \nonumber
    & + \overline{b}^2(4824 + 50436b^2 + 30647c^2 \\
    \nonumber
    &- 65|c|^2(-212 - 595b^2 + 63c^2))\\
    \nonumber
    &+ |b|^2(71037 + 4680b^2 + 13780c^2 \\
    \nonumber
    &+ 38675|c|^4 + 56293|c|^2 - 1170(-14 + b^2)\overline{c}^2)\\
    \nonumber
    &+ 9(536(5 + b^2 + c^2) + \overline{c}(c(7893 + 1820b^2 + 520c^2) \\
    &+ \overline{c}(536 + 1058b^2 + 5604c^2 + 65|c|^2(8 - 2b^2 + 7c^2))))).
\end{align}
As with the fifth minor, $\det\a_{2}$ is positive for all real $b, c$. 

To fully analyze the value distribution of $\alpha_{2}^{5}$ and $\det\a_{2}$ when $b,c\in\bbC$, their graphical representations can be used for verification.

Figures \ref{fig:a25} and \ref{fig:a26} plot $1075648\a_{2}^{5}$ and $7529536\det\a_{2}$ as functions of $b$ for fixed values of $c$, with the z-axis starting at $1$. The displayed range is restricted to function values between $1$ and $40000$. Neither figure shows any discontinuities, indicating that $\a_{2}^{5}>1>0$ and $\det\a_{2}>1>0$ always hold. 
Furthermore, Figures \ref{fig:a255} and \ref{fig:a265} suggest that the minimum values of both $1075648\a_{2}^{5}$ and $7529536\det\a_{2}$ lie within the interval $[1,10]$.
\end{proof}

\section{Conclusions}
\label{sec:con}

We have investigated the distillation of a family of symmetric NPT entangled two-qutrit states $\r$ of rank five with given eigenvectors. We explicitly constructed five families of such states by requiring four of the five eigenvalues to be the same, and showed their 1-distillability. We also constructed some symmetric $\r$ which may be not 1-distillable. The next target from this paper is to establish the explicit interval of eigenvalues of $\r$ which is not 1-distillable. Another interesting problem is to construct more two-qutrit symmetric states which are 1-distillable.

\section*{Acknowledgements} 

ZHS and LC were supported by the NNSF of China (Grant No. 12471427).

\begin{figure}[H]
    \centering
    \begin{subfigure}{0.9\textwidth}
        \includegraphics[width=\textwidth]{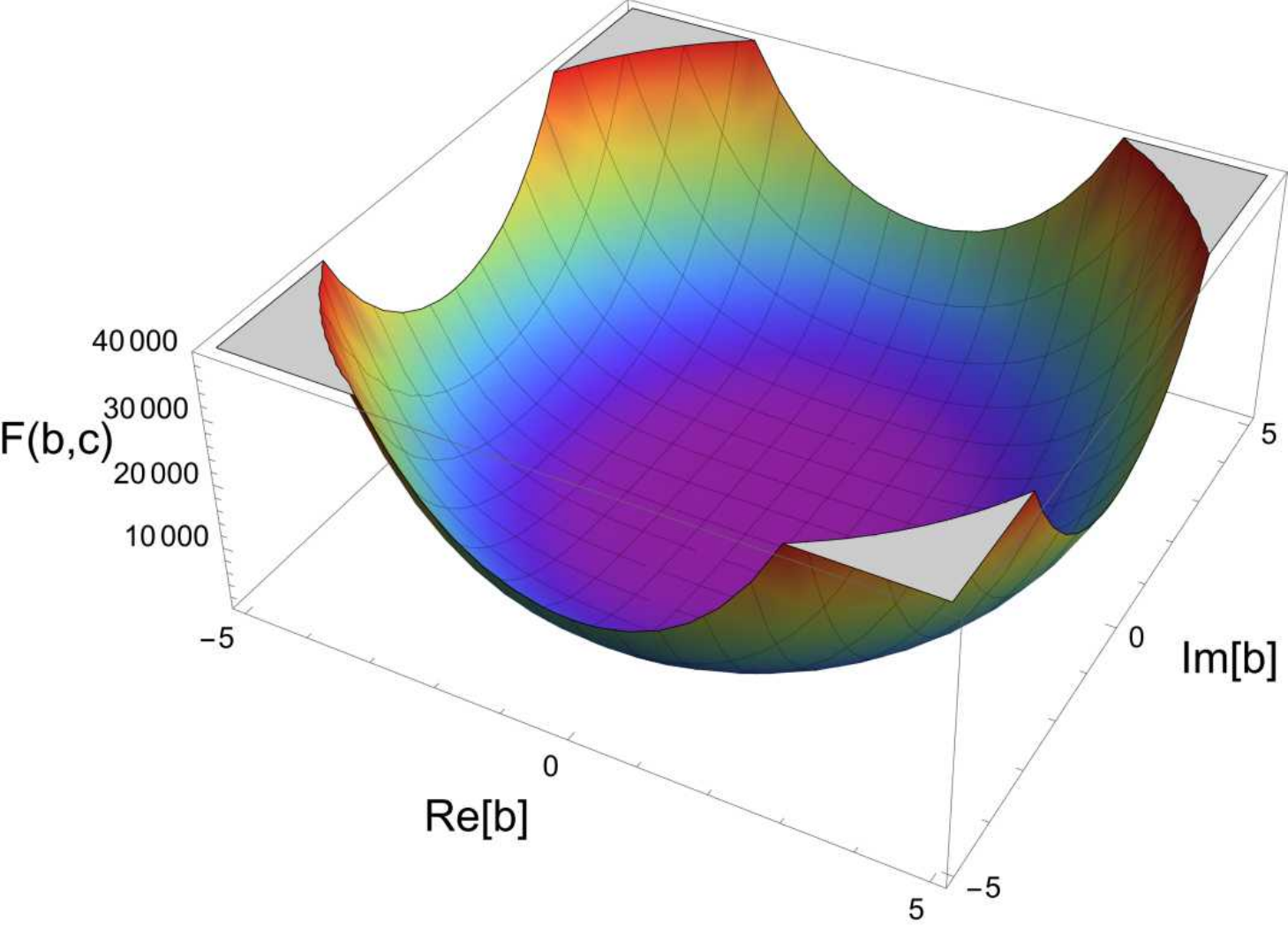}
        \caption{$c=0$}
        \label{fig:a250}
    \end{subfigure}

    \vspace{0.5cm}
    \begin{subfigure}{0.45\textwidth}
        \includegraphics[width=\textwidth]{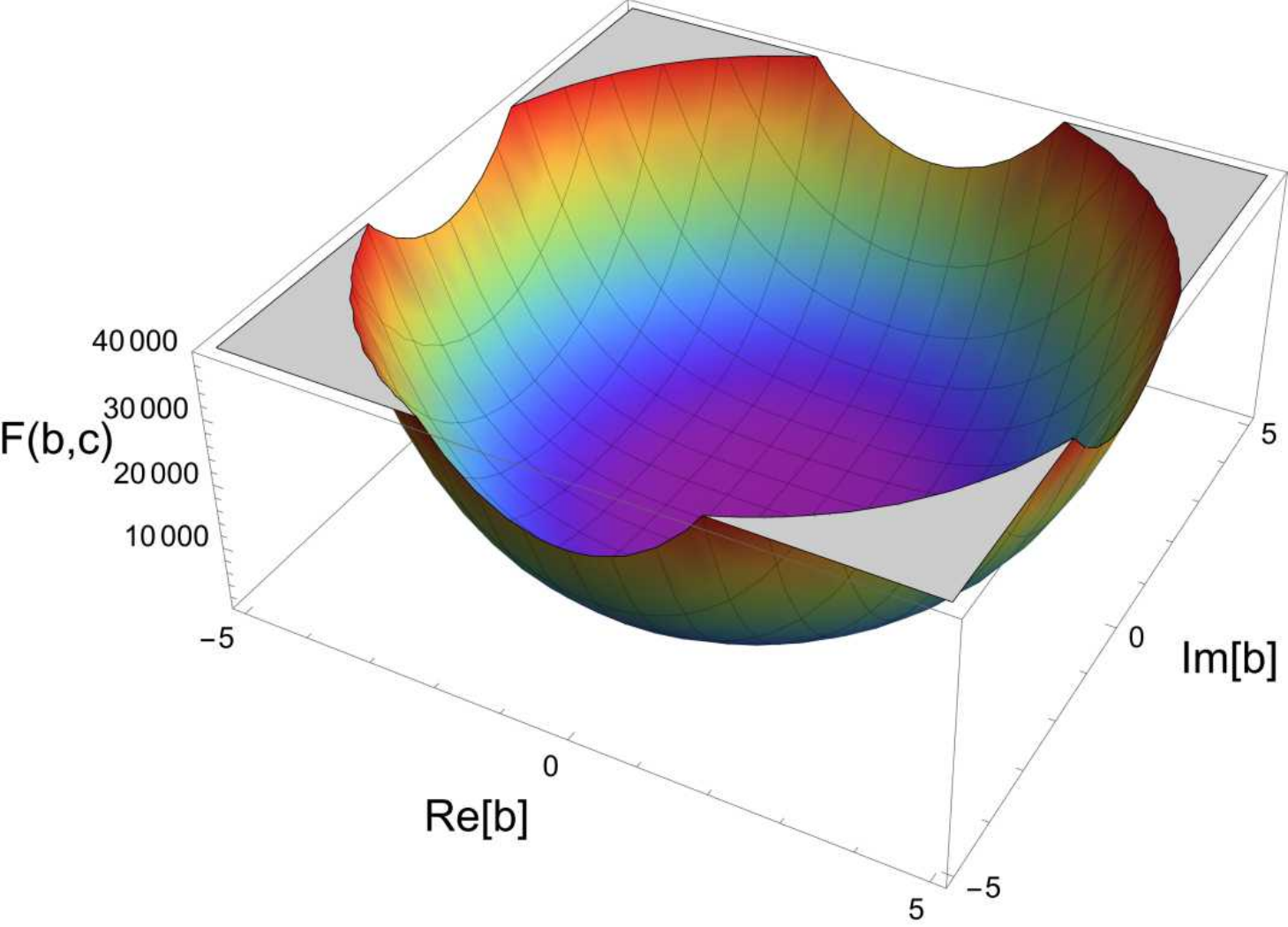}
        \caption{$c=-1-i$}
        \label{fig:a251}
    \end{subfigure}
    \hfill
    \begin{subfigure}{0.45\textwidth}
        \includegraphics[width=\textwidth]{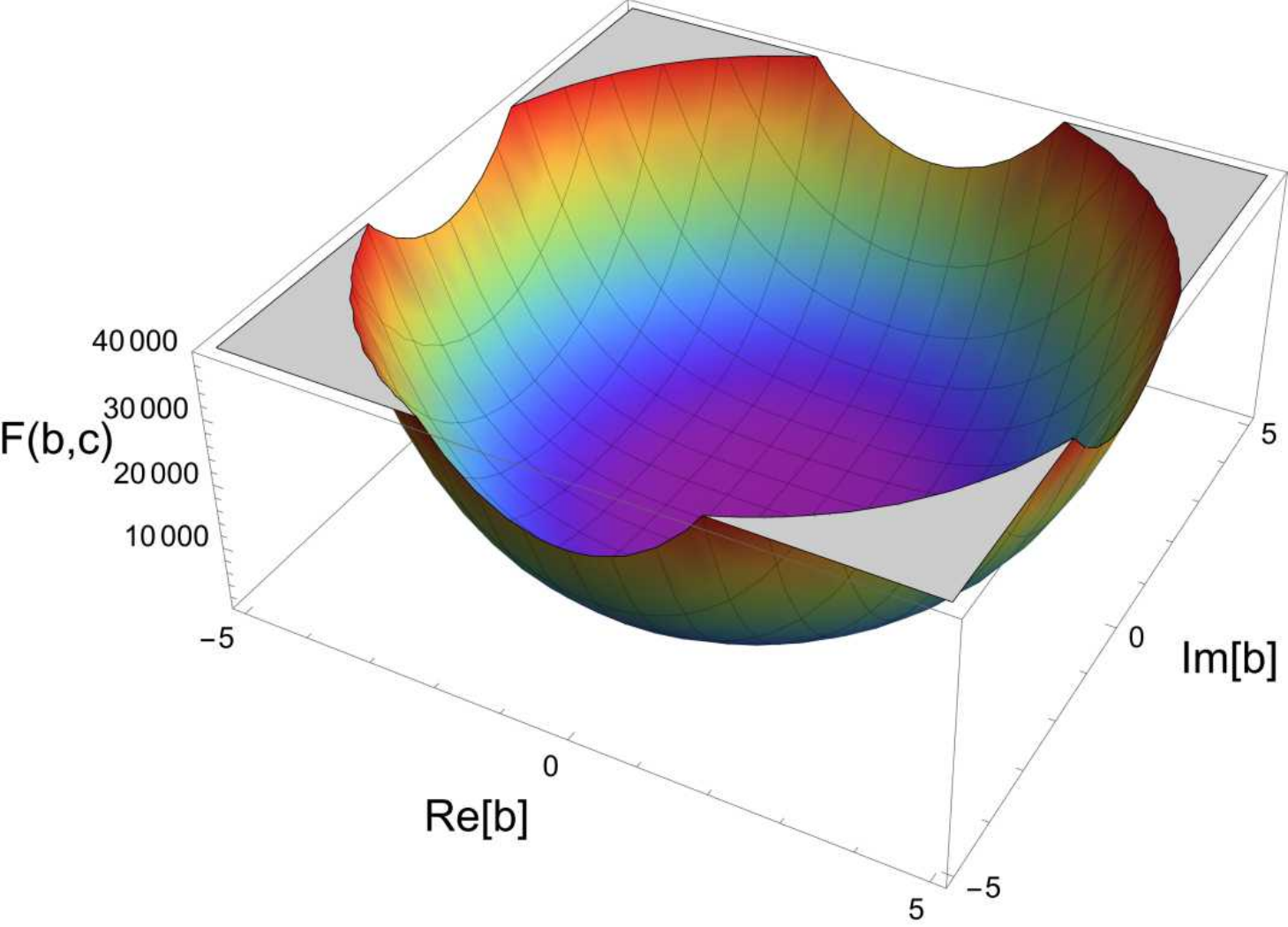}
        \caption{$c=1+i$}
        \label{fig:a252}
    \end{subfigure}
    
    \vspace{0.5cm}
    \begin{subfigure}{0.45\textwidth}
        \includegraphics[width=\textwidth]{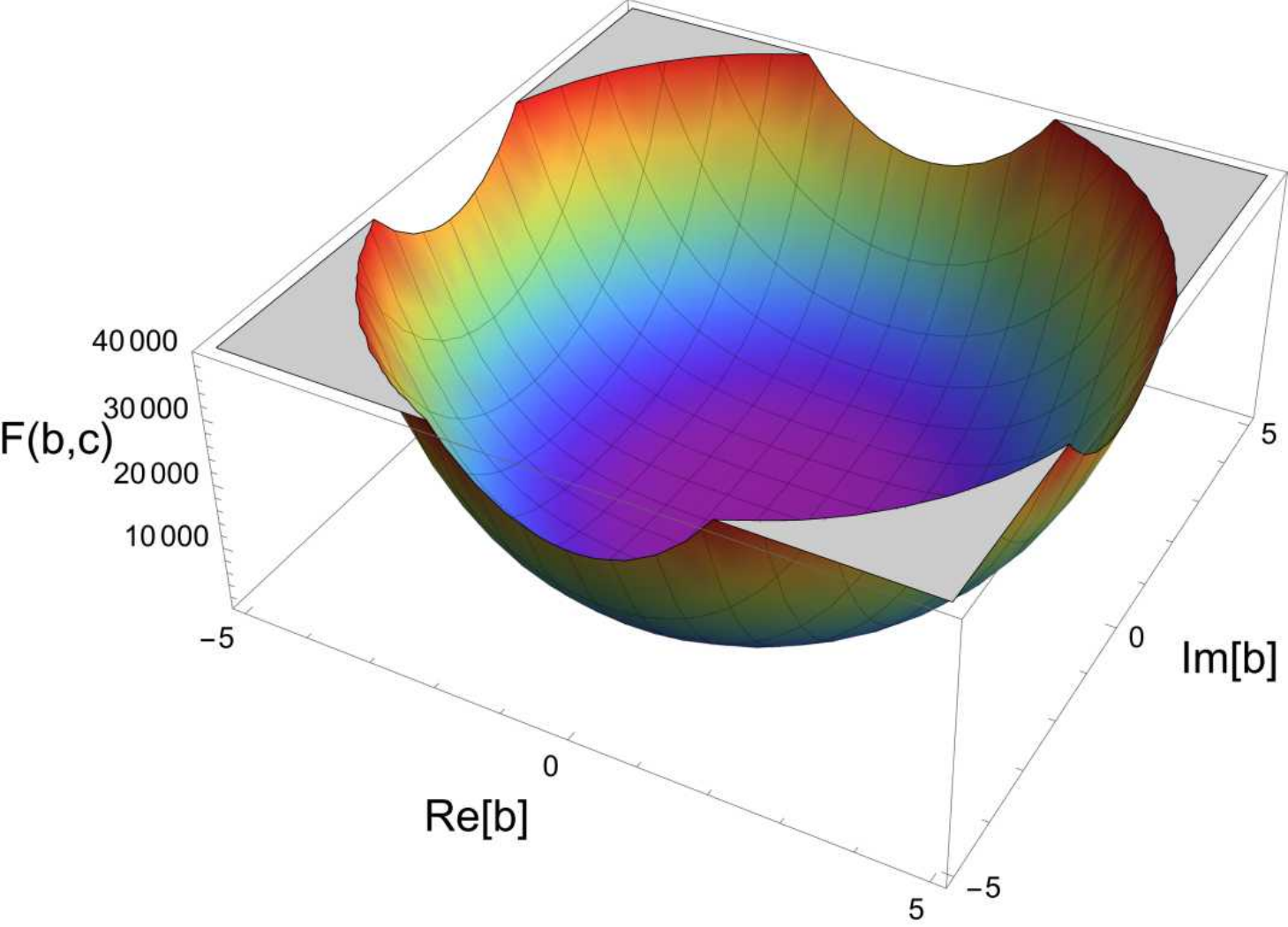}
        \caption{$c=-1+i$}
        \label{fig:a253}
    \end{subfigure}
    \hfill
    \begin{subfigure}{0.45\textwidth}
        \includegraphics[width=\textwidth]{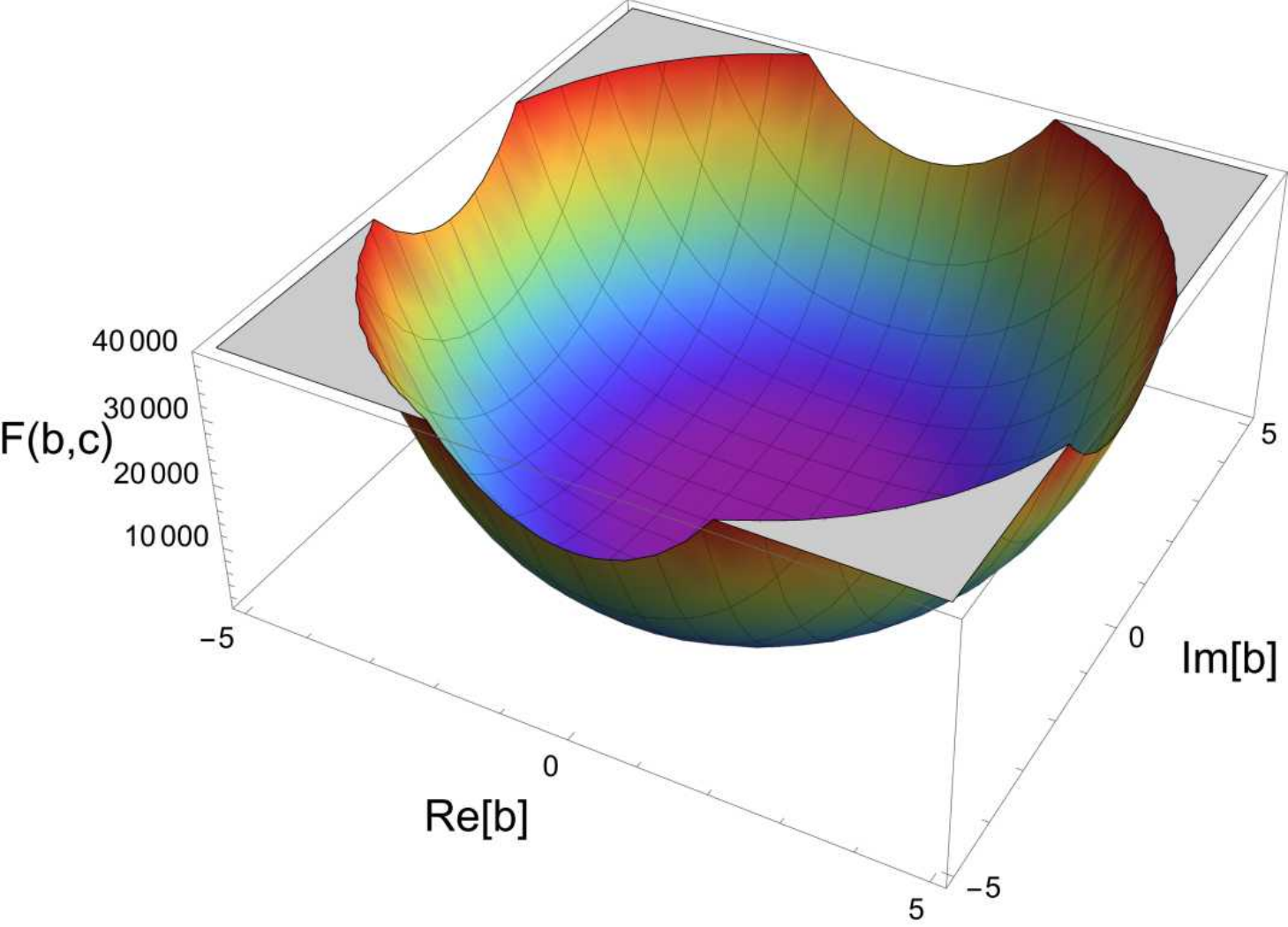}
        \caption{$c=1-i$}
        \label{fig:a254}
    \end{subfigure}
    
    \caption{$\mathrm{z-axis}:F(b,c)=1075648\a_{2}^{5}$, starts at $1$; $\mathrm{x-axis}:\mathrm{Re}[b]$; $\mathrm{y-axis}:\mathrm{Im}[b]$}
    \label{fig:a25}
\end{figure}

\begin{figure}[H]
    \centering
    \begin{subfigure}{0.9\textwidth}
        \includegraphics[width=\textwidth]{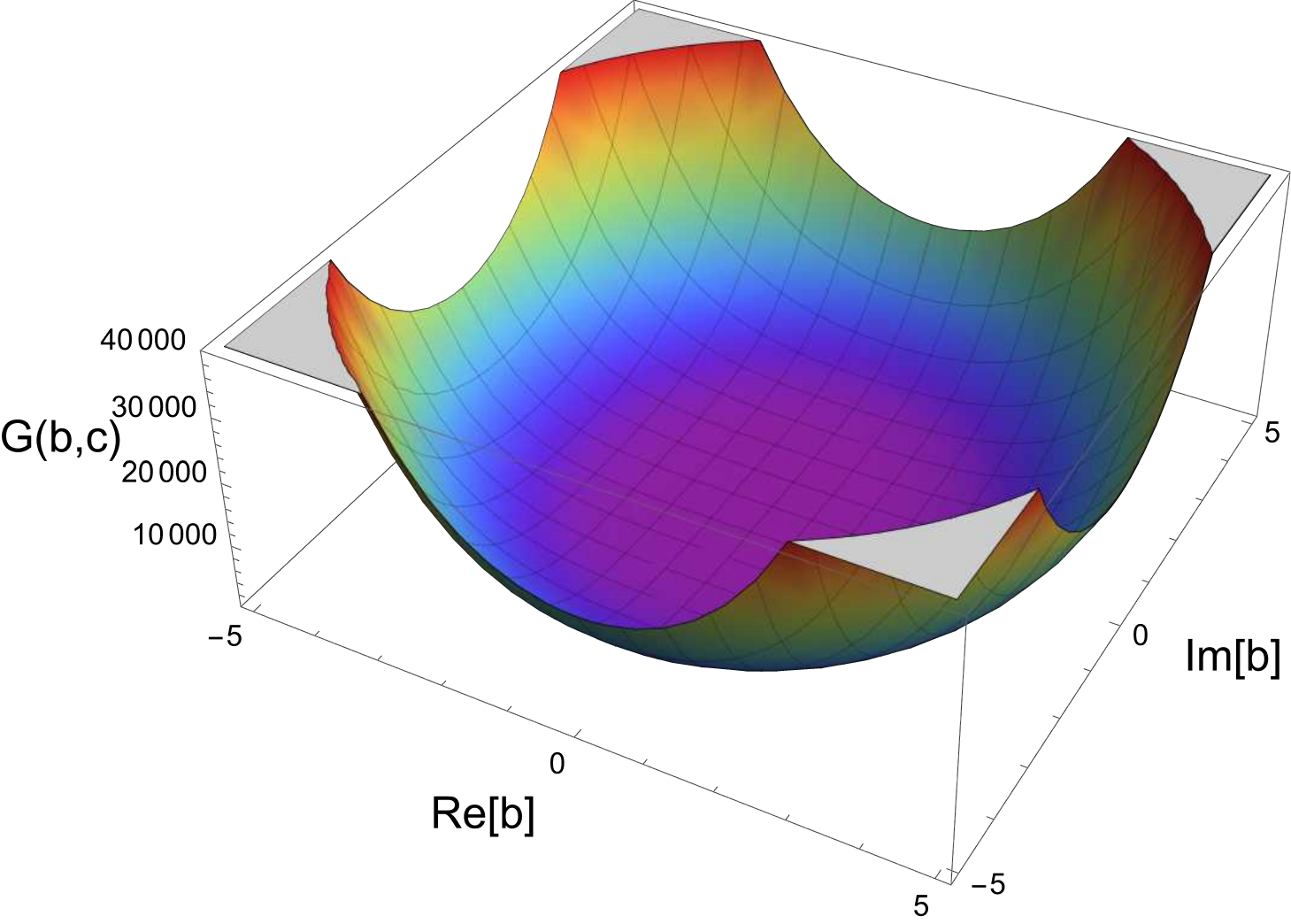}
        \caption{$c=0$}
        \label{fig:a260}
    \end{subfigure}

    \vspace{0.5cm}
    \begin{subfigure}{0.45\textwidth}
        \includegraphics[width=\textwidth]{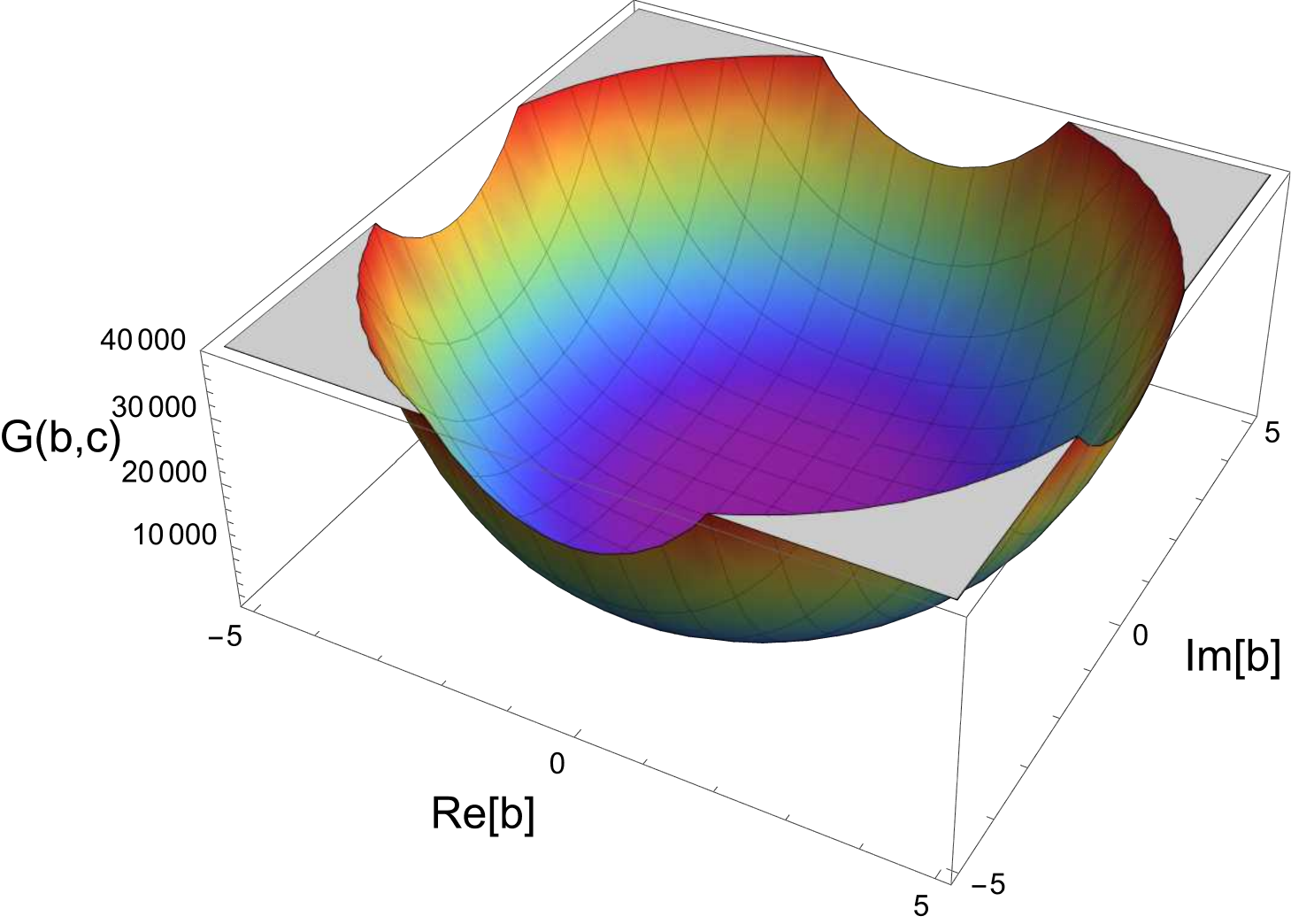}
        \caption{$c=-1-i$}
        \label{fig:a261}
    \end{subfigure}
    \hfill
    \begin{subfigure}{0.45\textwidth}
        \includegraphics[width=\textwidth]{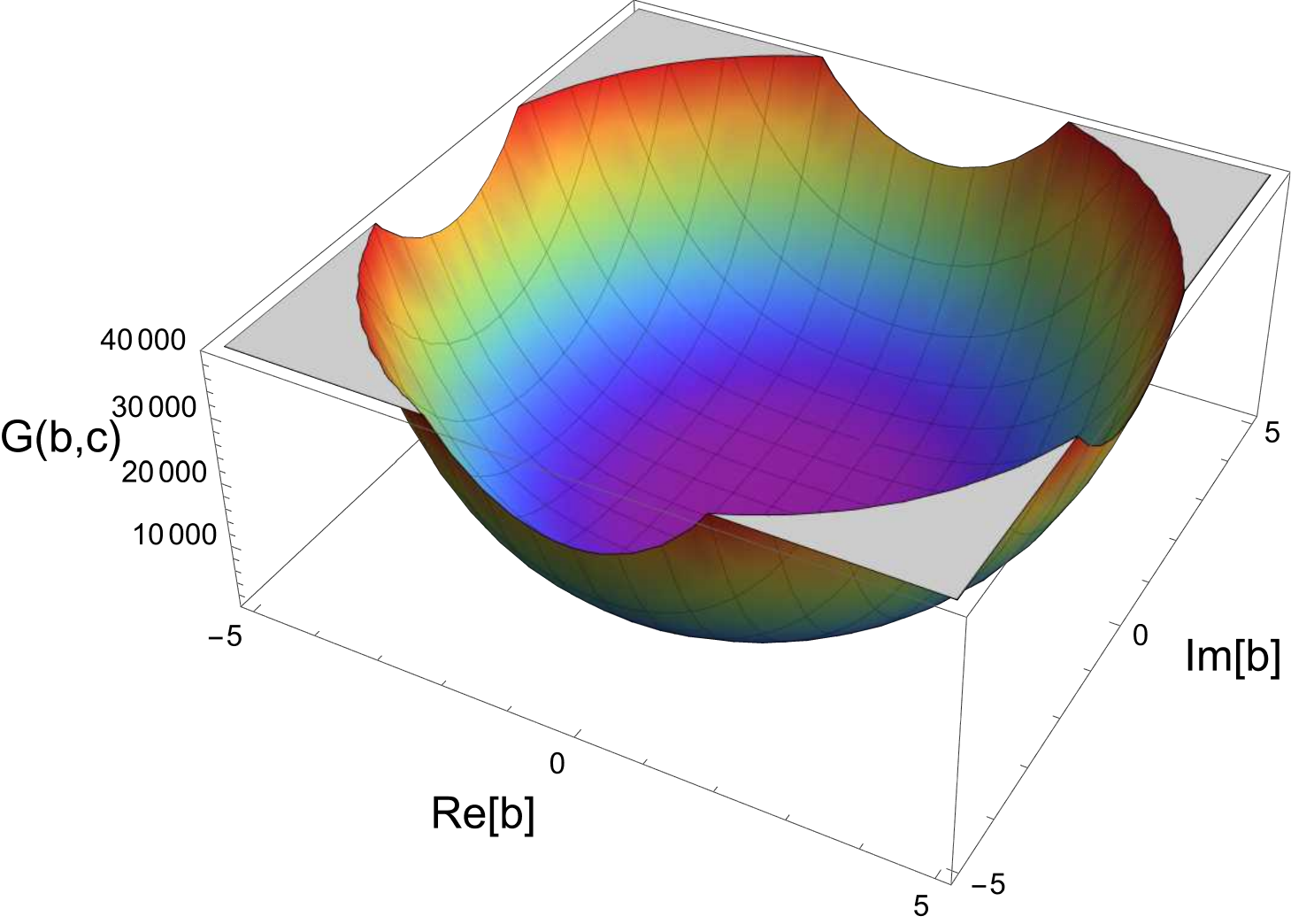}
        \caption{$c=1+i$}
        \label{fig:a262}
    \end{subfigure}
    
    \vspace{0.5cm}
    \begin{subfigure}{0.45\textwidth}
        \includegraphics[width=\textwidth]{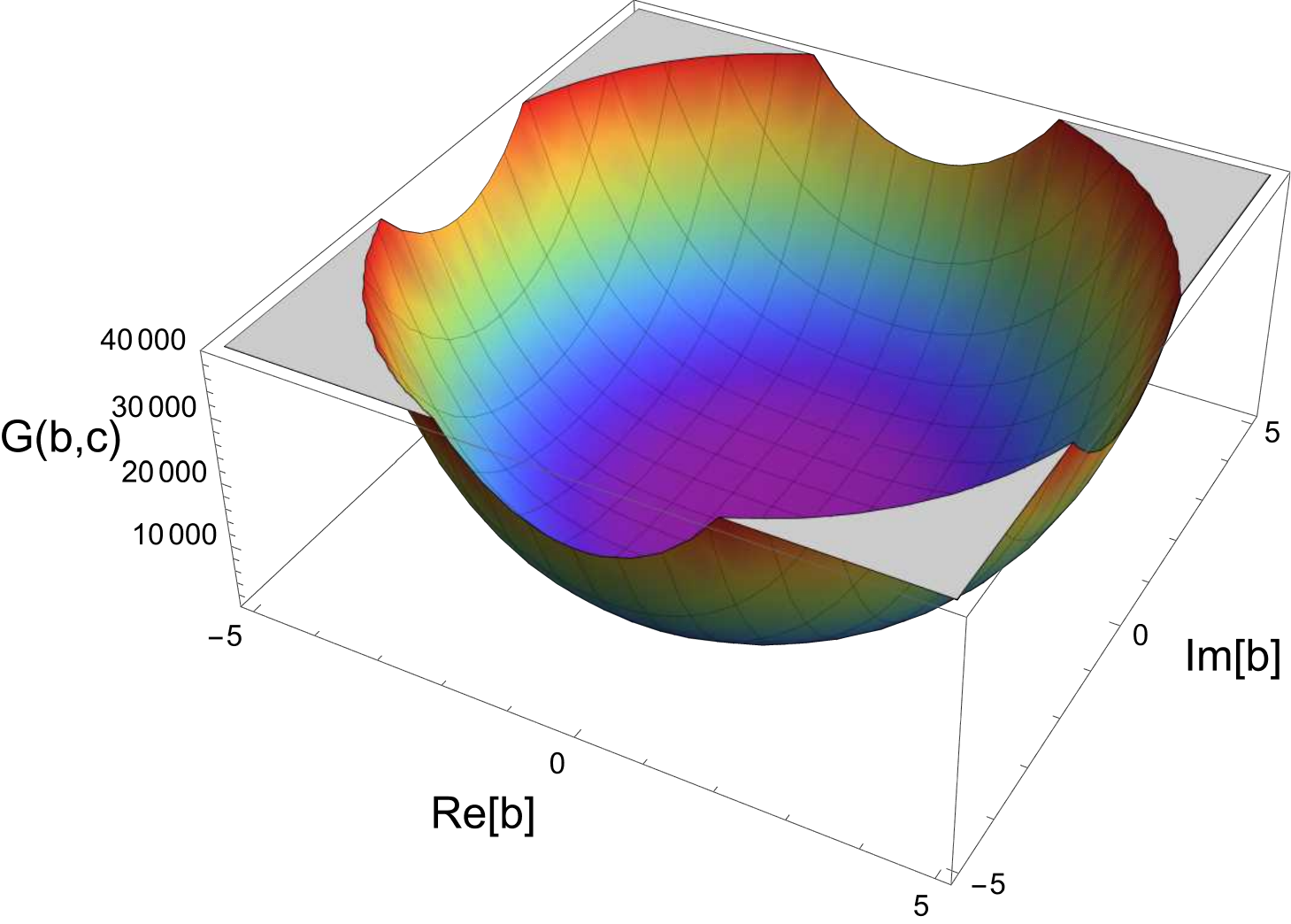}
        \caption{$c=1-i$}
        \label{fig:a263}
    \end{subfigure}
    \hfill
    \begin{subfigure}{0.45\textwidth}
        \includegraphics[width=\textwidth]{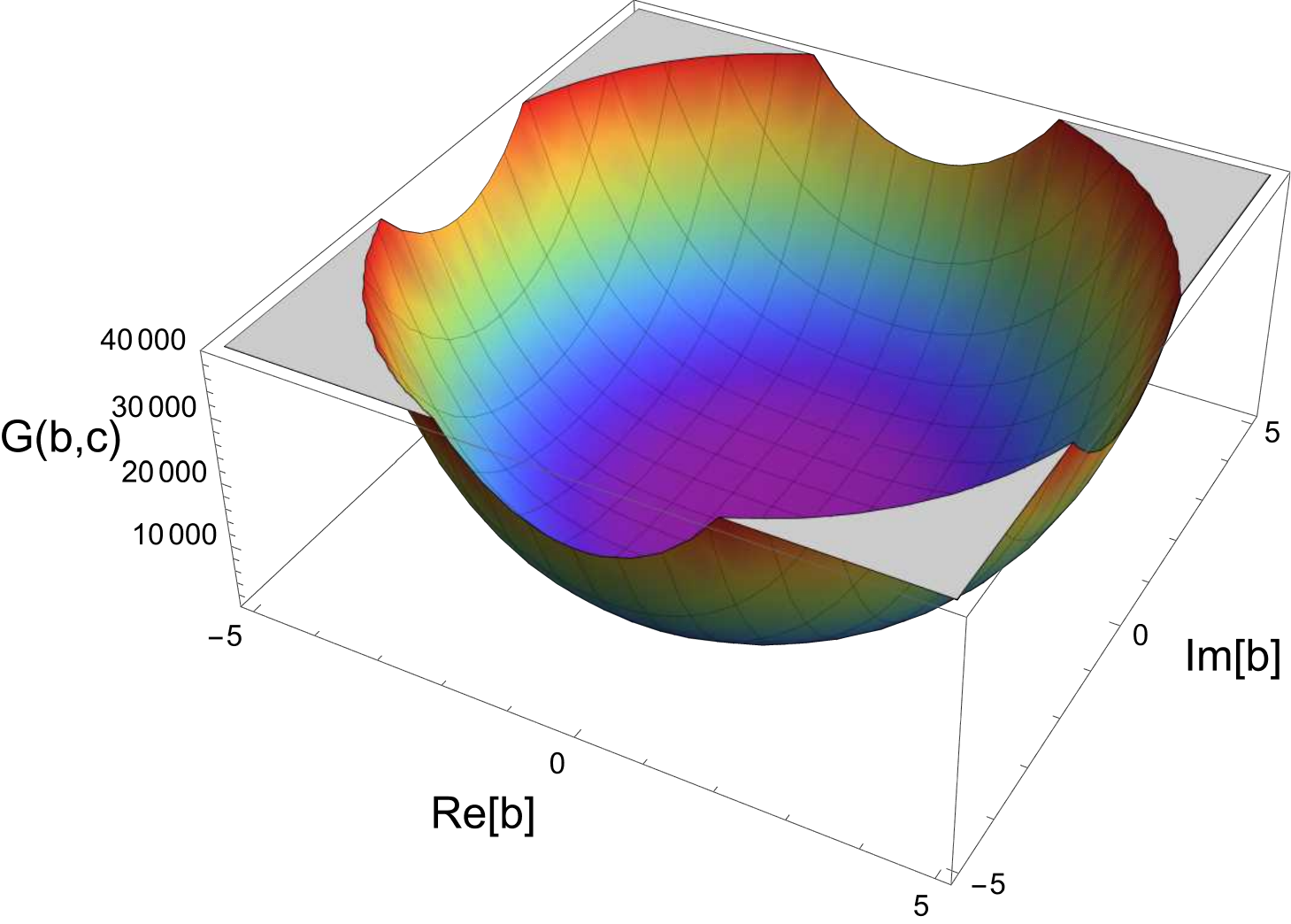}
        \caption{$c=-1+i$}
        \label{fig:a264}
    \end{subfigure}
    
    \caption{$\mathrm{z-axis}:G(b,c)=7529536\det{\a_{2}}^{x=\frac{1}{7}}$, starts at $1$; $\mathrm{x-axis}:\mathrm{Re}[b]$; $\mathrm{y-axis}:\mathrm{Im}[b]$}
    \label{fig:a26}
\end{figure}

\begin{figure}[H]
    \centering
    \includegraphics[width=0.9\linewidth]{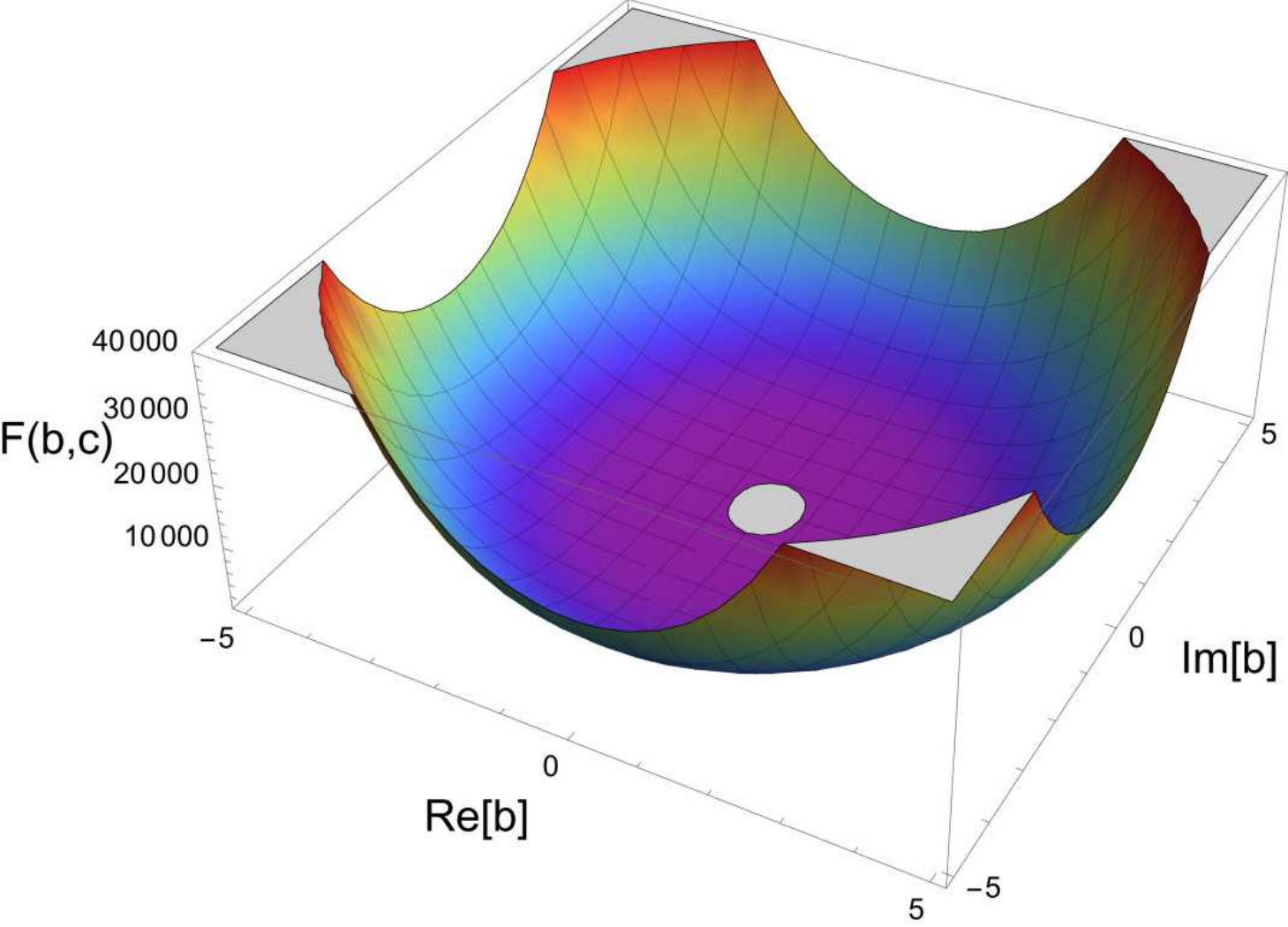}
    \caption{$\mathrm{z-axis}:F(b,c)=1075648\a_{2}^{5}$, starts at $10$; $\mathrm{x-axis}:\mathrm{Re}[b]$; $\mathrm{y-axis}:\mathrm{Im}[b]$}
    \label{fig:a255}
\end{figure}

\begin{figure}[H]
    \centering
    \includegraphics[width=0.9\linewidth]{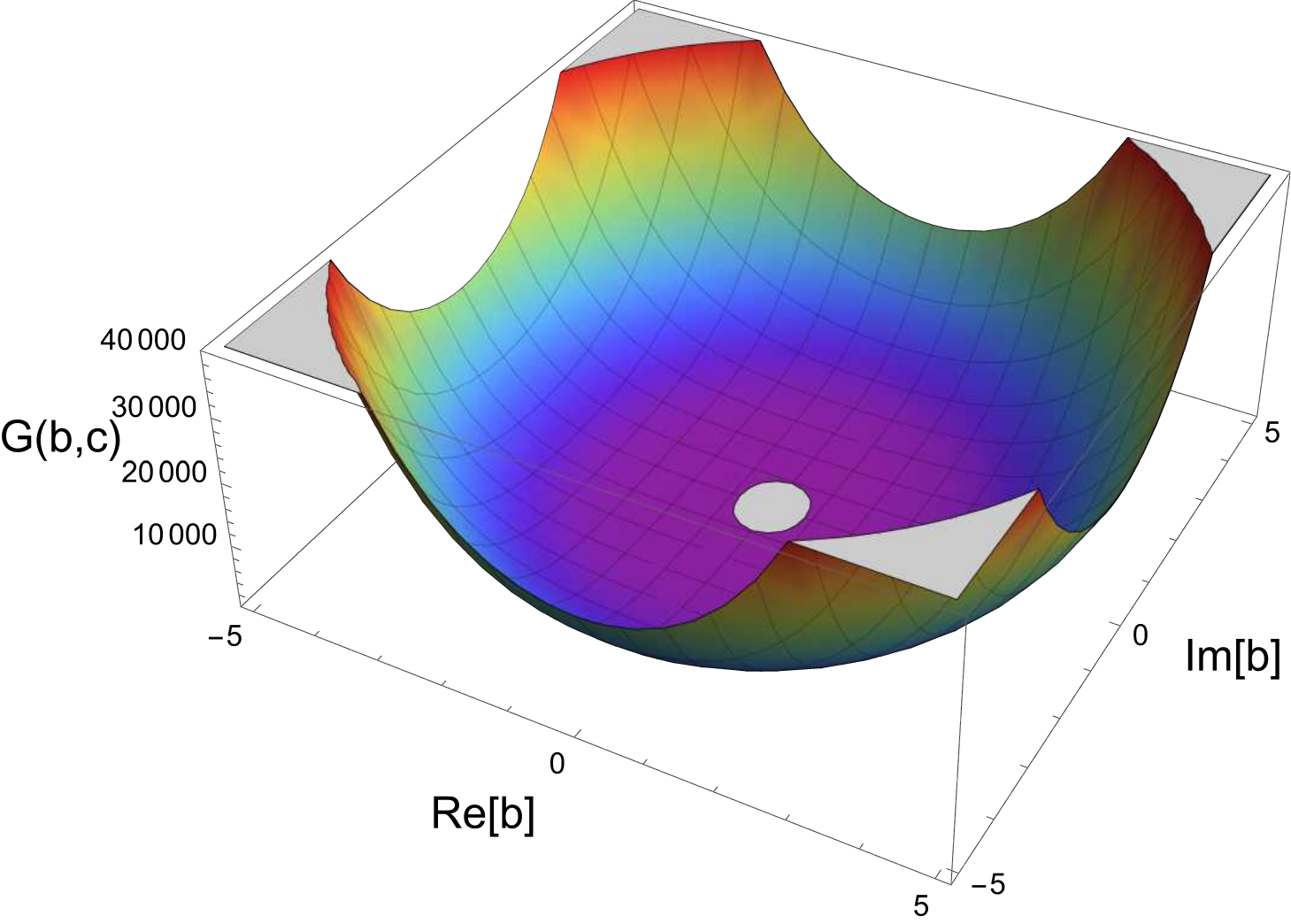}
    \caption{$\mathrm{z-axis}:G(b,c)=7529536\det{\a_{2}}^{x=\frac{1}{7}}$, starts at $10$; $\mathrm{x-axis}:\mathrm{Re}[b]$; $\mathrm{y-axis}:\mathrm{Im}[b]$}
    \label{fig:a265}
\end{figure}

\newpage

\bibliographystyle{unsrt}
\bibliography{3x3distill=zihua}

\begin{thebibliography}{10}

\bibitem{bbc93}
Charles~H. Bennett, Gilles Brassard, Claude Cr\'epeau, Richard Jozsa, Asher Peres, and William~K. Wootters.
\newblock Teleporting an unknown quantum state via dual classical and einstein-podolsky-rosen channels.
\newblock {\em Phys. Rev. Lett.}, 70:1895--1899, Mar 1993.

\bibitem{werner89}
Reinhard~F. Werner.
\newblock Quantum states with einstein-podolsky-rosen correlations admitting a hidden-variable model.
\newblock {\em Physical Review A}, 40:4277--4281, Oct 1989.

\bibitem{peres1996}
A.~Peres.
\newblock Separability criterion for density matrices.
\newblock {\em Phys. Rev. Lett.}, 77:1413, 1996.

\bibitem{hhh96}
M.~{Horodecki}, P.~{Horodecki}, and R.~{Horodecki}.
\newblock {Separability of mixed states: necessary and sufficient conditions}.
\newblock {\em Physics Letters A}, 223:1--8, February 1996.

\bibitem{Horodecki1997Reduction}
Michal Horodecki and Pawel Horodecki.
\newblock Reduction criterion of separability and limits for a class of protocols of entanglement distillation.
\newblock {\em Physics}, 59(59):4206--4216, 1997.

\bibitem{1997Inseparable}
Micha Horodecki, Pawe Horodecki, and Ryszard Horodecki.
\newblock Inseparable two spin- density matrices can be distilled to a singlet form.
\newblock {\em Physical Review Letters}, 78(4):574--577, 1997.

\bibitem{hhh1998}
Michal Horodecki, Pawel Horodecki, and Ryszard Horodecki.
\newblock Mixed-state entanglement and distillation: Is there a bound a entanglement in nature?
\newblock {\em Phys. Rev. Lett.}, 80:5239--5242, 1998.

\bibitem{rains1999}
E.~M. Rains.
\newblock Bound on distillable entanglement.
\newblock {\em Phys. Rev. A}, 60:179, 1999.

\bibitem{rains2001}
E.~M. Rains.
\newblock A semidefinite program for distillable entanglement.
\newblock {\em IEEE Trans. Inform. Theory}, 47:2921, 2001.

\bibitem{dw2005}
I.~Devetak and A.~Winter.
\newblock Distillation of secret key and entanglement from quantum states.
\newblock {\em Proc. R. Soc. A}, 461:207, 2005.

\bibitem{Kwiat2001Experimental}
P.~G. Kwiat, S~Barraza-Lopez, A~Stefanov, and N~Gisin.
\newblock Experimental entanglement distillation and 'hidden' non-locality.
\newblock {\em Nature}, 409(6823):1014--7, 2001.

\bibitem{ch12ijmpb}
Lin Chen and Masahito Hayashi.
\newblock Nondistillable entanglement guarantees distillable entanglement.
\newblock {\em International Journal of Modern Physics B}, 26(27n28):1243008, 2012.

\bibitem{2000Evidence}
David~P. Divincenzo, Peter~W. Shor, John~A. Smolin, Barbara~M. Terhal, and Ashish~V. Thapliyal.
\newblock Evidence for bound entangled states with negative partial transpose.
\newblock {\em Physical Review A}, 61(6):062312, 2000.

\bibitem{2002.03233}
Pawel Horodecki, Lukasz Rudnicki, and Karol Zyczkowski.
\newblock {F}ive open problems in quantum information, 2020.
\newblock arXiv:2002.03233v1.

\bibitem{watrous04}
John Watrous.
\newblock Many copies may be required for entanglement distillation.
\newblock {\em Phys. Rev. Lett.}, 93:010502, 2004.

\bibitem{vd06}
Reinaldo~O. Vianna and Andrew~C. Doherty.
\newblock Distillability of werner states using entanglement witnesses and robust semidefinite programs.
\newblock {\em Phys. Rev. A}, 74:052306, 2006.

\bibitem{Chen2018Generalized}
Lin Chen, Wai~Shing Tang, and Yu~Yang.
\newblock Generalized choi states and 2-distillability of quantum states.
\newblock {\em Quantum Information Processing}, 17(5):110, 2018.

\bibitem{1999Rank}
Pawel Horodecki, John~A. Smolin, Barbara~M. Terhal, and Ashish~V. Thapliyal.
\newblock Rank two bound entangled states do not exist.
\newblock 1999.

\bibitem{Lin2008Rank}
Lin, Chen, Yi-Xin, and Chen.
\newblock Rank-three bipartite entangled states are distillable.
\newblock {\em Physical Review A}, 78(2):22318--22318, 2008.

\bibitem{cd16pra}
Lin Chen and Dragomir~Z Djokovic.
\newblock Distillability of non-positive-partial-transpose bipartite quantum states of rank four.
\newblock {\em Phys. Rev. A}, 94:052318, Nov 2016.

\bibitem{cd11jpa}
Lin Chen and Dragomir~Z Djokovic.
\newblock Distillability and ppt entanglement of low-rank quantum states.
\newblock {\em Journal of Physics A: Mathematical and Theoretical}, 44(28):285303, 2011.

\end{thebibliography}

\end{document}